\title{On Robust Popular Matchings with Tie-Bounded Preferences and Stable Matchings with Two-Sided Ties} %TODO Please add
\titlerunning{Robust Popular Matchings and Stable Matchings with Two-Sided Ties} %TODO optional, please use if title is longer than one line
\author{Koustav De}{Department of Computer Science and Engineering, IIT Kharagpur, Kharagpur, West Midnapore, West Bengal - 721302, India}{koustavde7@kgpian.iitkgp.ac.in}{https://orcid.org/0000-0002-4434-0627}{}%TODO mandatory, please use full name; only 1 author per \author macro; first two parameters are mandatory, other parameters can be empty. Please provide at least the name of the affiliation and the country. The full address is optional. Use additional curly braces to indicate the correct name splitting when the last name consists of multiple name parts.
\authorrunning{K. De} %TODO mandatory. First: Use abbreviated first/middle names. Second (only in severe cases): Use first author plus 'et al.'
\keywords{Matching under preferences, robustness, popularity, NP-complete} %TODO mandatory; please add comma-separated list of keywords
\newtheorem{case}{Case}
\newcommand{\RobPopM}{{\sc Robust Popular Matching}\xspace}
\newcommand{\OneSd}{{\sc One-Sided Model}\xspace}
\newcommand{\TwoSd}{{\sc Two-Sided Model with One-Sided Ties}\xspace}
\newcommand{\RPMOneSd}{{\sc Robust Popular Matching in One-Sided Model}\xspace}
\newcommand{\RPMTwoSd}{{\sc Robust Popular Matching in Two-Sided Model with One-Sided Ties}\xspace}
\begin{document}

\maketitle

\begin{abstract}
We are given a bipartite graph $G = \left( A \cup B, E \right)$. In the one-sided model, every $a \in A$ (often called agents) ranks its neighbours $z \in N_{a}$ strictly, and no $b \in B$ has any preference order over its neighbours $y \in N_{b}$, and vertices in $B$ abstain from casting their votes to matchings. In the two-sided model with one-sided ties, every $a \in A$ ranks its neighbours $z \in N_{a}$ strictly, and every $b \in B$ puts all of its neighbours into a single large tie, i.e., $b \in B$ prefers every $y \in N_{b}$ equally. In this two-sided model with one-sided ties, when two matchings compete in a majority election, $b \in B$ abstains from casting its vote for a matching when both the matchings saturate $b$ or both leave $b$ unsaturated; else $b$ prefers the matching where it is saturated. A matching $M$ is popular in $G$ if there is no other matching $M^{\prime}$ such that the number of vertices preferring $M^{\prime}$ to $M$ is more than the number of vertices preferring $M$ to $M^{\prime}$. A popular matching $M$ is \emph{robust} if it remains popular among multiple instances. Bullinger et al. \cite{10.5555/3635637.3662870} have studied the popularity of matchings under preferences in a bipartite graph and have presented a polynomial-time algorithm for deciding whether there exists a robust popular matching if instances only differ with respect to the preferences of a single agent.

Popular matching does not always need to exist in the one-sided model \cite{doi:10.1137/06067328X}. We have analysed the cases when a robust popular matching exists in the one-sided model where only one agent alters her preference order among the instances, and we have proposed a polynomial-time algorithm to decide if there exists a robust popular matching when instances differ only with respect to the preference orders of a single agent. 

We give a simple characterisation of popular matchings in the two-sided model with one-sided ties. We show that in the two-sided model with one-sided ties, if the input instances differ only with respect to the preference orders of a single agent, there is a polynomial-time algorithm to decide whether there exists a robust popular matching. We have been able to decide the stable matching problem in bipartite graphs $G = (A \cup B, E)$ where \textit{both} sides have weak preferences (ties allowed), with the restriction that every tie has length at most $k$. 
\end{abstract}

\section{Introduction}
Matchings under preferences have been a long-standing topic of research due to their wide range of practical applications, including labour markets, organ allocation, and online dating. The fundamental goal is to pair two disjoint sets of agents, each of whom holds a preference ranking over the agents on the opposite side. A cornerstone result in this area is the Deferred Acceptance Algorithm by Gale and Shapley~\cite{37ee0948-e9f0-3123-b607-2888da363a56}, which identifies \emph{stable matchings}—those admitting no pair of agents that prefer each other over their assigned partners. Over the decades, numerous extensions and alternative solution concepts have been explored.

Among these, the notion of \emph{popular matchings}, introduced by G\"{a}rdenfors~\cite{1975_Gärdenfors}, has received considerable attention (the book chapter by Cseh~\cite{cseh2017popular}). A matching is \emph{popular} if it does not lose in a majority vote against any other matching, where each agent votes according to their preference between the two outcomes. It is known that every stable matching is popular, though the converse does not necessarily hold.

In realistic settings, agents’ preferences are often uncertain or subject to change. Preferences may be revised as agents interact or as circumstances alter such as when certain options become unavailable or new ones emerge. These natural sources of variability motivate the study of \emph{robustness} in matching outcomes: ideally, one would like to find matchings that remain desirable across multiple related preference profiles.
  
It is known that popular matchings do not necessarily exist in the one-sided preference model \cite{doi:10.1137/06067328X}. In our study, we investigate the existence of \emph{robust popular matchings} in such settings, focusing on instances where only a single agent alters her preference order between two versions of the input.  
We propose a polynomial-time algorithm that decides whether a robust popular matching exists under this restricted perturbation. Refer to \Cref{sec:one-sd} for the specific details.

Furthermore, we extend our analysis to the two-sided model with one-sided ties and provide a structural characterization of popular matchings in this setting.  
We show that, when the two-sided input instances differ only with respect to the preference list of one agent, the existence of a robust popular matching can again be decided in polynomial time. The key idea for this algorithm is to define a set of hybrid instances on which we search for popular matchings that include a predefined edge. See \Cref{sec:two-sd-des,sec:two-sd-struc,sec:two-sd-algo} for details.

In addition, we address the \emph{stable matching problem} for bipartite graphs $G = (A \cup B, E)$ in which both sides may express \emph{weak preferences} (i.e., ties are allowed).  
Our result holds under the natural restriction that every tie has bounded length at most $k$, and we demonstrate that a stable matching can be efficiently decided in this case. For details, see \Cref{sec:stable-mat-2-sided-tie}.

These results deepen our understanding of how popularity and robustness interact in matching markets, and they open pathways for further exploration of robust outcomes under related models of popularity and stability.

\section{Background and Related Work}
The notion of \emph{popularity} for matchings—originally introduced by G\"ardenfors \cite{1975_Gärdenfors} as ``majority assignments'' and its stronger variant (strong popularity) are well established and have close ties to Condorcet winners in social choice \cite{Condorcet1785}; surveys and textbook chapters \cite{cseh2017popular,doi:10.1142/8591} summarize this literature.  

Algorithmically, popularity interacts subtly with stability: every stable matching is popular, but popular matchings can vary in size. Several positive algorithmic results exist for example, computing maximum-size popular matchings and finding a popular matching that includes a single designated edge \cite{10.5555/3288898.3288941}, yet diverse NP-hardness results are known when inputs are relaxed (e.g., weak preferences, roommates, or forcing multiple edges) and the size of the matching is neither maximum nor stable \cite{10.5555/3310435.3310608}.

It is of great importance to study the algorithmic problem of computing a popular matching that contains a predefined set of edges. A polynomial time algorithm exists for this problem where we want only a single edge to be included in the matching \cite{10.5555/3288898.3288941}. However, the same computational problem is NP-hard if at least two edges are forced \cite{10.5555/3310435.3310608}. Work on extensions and variants highlights where tractability breaks down and where it remains. Popular matchings for weak preferences have been studied by Bir\'{o} et al. \cite{10.1007/978-3-642-13073-1_10} where eventually the computation becomes NP-hard. One may further generalize the model by considering the \emph{roommate setting}, in which any pair of agents can be matched. In this case, even when preferences are strict, determining a popular matching already becomes computationally hard \cite{10.1145/3442354}.

Another line of research explores a \emph{probabilistic} extension of popularity, known as \emph{mixed popularity}, where the outcome is represented as a probability distribution over matchings \cite{KAVITHA20112679}. Mixed popular matchings are guaranteed to exist by the Minimax Theorem and can be efficiently computed when the solutions are restricted to matchings, even in the roommate setting \cite{10.1613/jair.1.13470,KAVITHA20112679}. However, identifying individual matchings that appear in the support of such popular distributions becomes computationally intractable in coalition formation models where outcomes may include coalitions of size three \cite{10.1613/jair.1.13470}.

The robustness of outcomes under input perturbations has also been investigated in various multiagent settings, most notably in voting~\cite{BREDERECK2021103403, Faliszewski_Rothe_2016,10.5555/2484920.2484987}. In this context, robustness is often examined through the lens of \emph{bribery}, which models the deliberate alteration of election inputs to influence the result. Such modifications typically incur a cost, commonly quantified by the \emph{swap distance} between the original and altered preference orders~\cite{10.1007/978-3-642-04645-2_27}.

%%%%%%%%%%%%%%%%%%%%%%%%%%%%%%%%%%%%%%%%%%%%%%%%%%%%%%%%%%%%%%%%%%%%%%%%
\section{Preliminaries}
\label{sec:Prelim}
\begin{definition}[Matching]
\label{def:Mat}
A matching $M$ is a set of pairwise non-adjacent edges, i.e., $e \cap e^{\prime} = \emptyset$ for all $e, e^{\prime} \in M$.
\end{definition}
We are given a bipartite graph $G = (A \cup B,E)$ where the vertices in $A$ are called agents and the vertices in $B$ are called jobs, and each vertex has a preference list ranking its neighbours in an order of preference. An edge $(a,b) \notin M$ blocks $M$ if both $a$ and $b$ prefer each other to their respective assignments in $M$. It is the least preferred option for a vertex to be left unmatched.
\begin{definition}[Stability]
\label{def:Mat-Stable}
A matching $M$ is stable if no edge blocks it.
\end{definition}

Stability in Matching is considered as a classical notion of optimality. The Gale-Shapley algorithm \cite{37ee0948-e9f0-3123-b607-2888da363a56} finds a stable matching in $G$ in linear time where ties are generally broken arbitrarily.

Let $M(v)$ denote the partner of the vertex $v$, matched by $M$, i.e., if there exists $e \in M$ with $v \in e$.
For any pair of given matchings $M$ and $N$, we say that a vertex $v$ prefers the matching $M$ to $N$ if $M(v) \succ N(v)$ order is present in $v$'s preference ranking or $v$ is matched in $M$ and unmatched in $N$. In an election between two matchings, $v$ votes for the matching that it prefers. It abstains from voting if $v$ is unmatched in both or prefers both the matchings equally. Let $\phi(M,N)$ denote the number of vertices who prefer $M$ to $N$. We say that the matching $M$ is more popular than matching $N$ if the condition, $\phi(M,N) > \phi(N,M)$, is true.
\begin{definition}[Popularity]
\label{Mat-Popular}
A matching $M$ is popular if for any other matching $N$, $\phi(M,N) \ge \phi(N,M)$, i.e., there is no matching more popular than $M$.
\end{definition}

Naturally the popularity in Matching plays an important role as a different class of optimality. Both \emph{stability} and \emph{popularity} are highly desirable properties for a matching in a bipartite graph~$G$. Stability ensures that there is no pair~$(a,b)$ such that both $a$ and $b$ would prefer each other over their current partners in the matching; thus, no two agents have an incentive to deviate. On the other hand, popularity emphasizes a collective or majority-based notion of optimality: a matching is \emph{popular} if there exists no other matching that would be preferred by a majority of the agents. In other words, popularity guarantees that a majority vote cannot overturn the current matching in favour of another.

In the case of strict preferences, stability is a stronger notion than popularity since every stable matching is also popular \cite{1975_Gärdenfors}. Hence, a popular matching is guaranteed to exist, and the Gale--Shapley algorithm efficiently computes one. This represents the ideal scenario, where the obtained matching is both stable and popular.

When ties are introduced into the preference lists, however, this desirable relationship breaks down. Stability no longer guarantees popularity, and, in fact, a popular matching may not exist at all. Furthermore, determining whether a popular matching exists becomes NP-hard \cite{10.1007/978-3-642-13073-1_10}, even in restricted settings where ties occur only on one side of the bipartition and each tie has length at most three \cite{doi:10.1137/16M1076162}.

\begin{definition}[Unpopularity Factor] It is a natural motivation to relax popularity due to the lack of existence of popular matchings in case of ties in the preferences. This notion of relaxing popularity to \emph{near-popularity} or ``low unpopularity'' of a matching has been well-studied in \cite{10.1007/978-3-540-78773-0_51,10.1007/978-3-662-47666-6_40,10.1287/moor.2021.1139,doi:10.1137/110852838,doi:10.1137/120902562,10.1007/s00224-020-09978-5} as a measure of unpopularity of a matching.
For matchings $M$ and $N$, define: 
\[\lambda(M,N)=
\begin{cases}
  \phi(N,M)/\phi(M,N) &~\text{if } \phi(M,N) > 0; \\
  1 &~\text{if } \phi(M,N)=\phi(N,M)=0; \\
  \infty &~\text{otherwise.}
\end{cases}
\]
The \emph{unpopularity factor} of $M$ is:
$u(M) = \max_N \lambda(M,N)$. A popular matching $M$ has $u(M)=1$. A matching with low value of $u(M)$ is called a \emph{near-popular} matching.
\end{definition}

\subsection{Dulmage--Mendelsohn Decomposition~\cite{Dulmage_Mendelsohn_1958}}
Let $M$ be a maximum matching in a bipartite graph $G = (A \cup B, E)$. 
Using $M$, the vertex set $A \cup B$ can be partitioned into three disjoint subsets based on the parity of alternating paths from unmatched vertices.

\begin{definition}[Even, Odd, and Unreachable Vertices]
A vertex $v$ is called \emph{even} (respectively, \emph{odd}) if there exists an alternating path with respect to $M$ from some unmatched vertex to $v$ whose length is even (respectively, odd). 
A vertex $v$ is called \emph{unreachable} if there exists no alternating path from any unmatched vertex to $v$.
\end{definition}

Let $E$, $O$, and $U$ denote the sets of even, odd, and unreachable vertices, respectively. 
The following properties, established in~\cite{10.5555/233157}, have been used in the algorithm and analysis in \cite{doi:10.1137/16M1076162}.

\begin{lemma}[Properties of the Dulmage--Mendelsohn Decomposition]
Let $M$ be any maximum matching in $G$. Then the following hold:
\begin{enumerate}[(i)]
    \item The sets $E$, $O$, and $U$ are pairwise disjoint. Moreover, if $M'$ is another maximum matching in $G$, and $E'$, $O'$, and $U'$ denote the corresponding sets with respect to $M'$, then $E = E'$, $O = O'$, and $U = U'$.
    \item Every maximum matching $M$ matches all vertices in $O \cup U$ and has size $|M| = |O| + \frac{|U|}{2}$.
    
    In $M$, each vertex in $O$ is matched with a vertex in $E$, and each vertex in $U$ is matched with another vertex in $U$.
    \item The graph $G$ has no edge in $E \times (E \cup U)$.
\end{enumerate}
\end{lemma}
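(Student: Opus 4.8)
The plan is to reduce all three parts to the fact that a maximum matching admits no augmenting path, and then to carry out careful parity bookkeeping on $M$-alternating paths emanating from $M$-unmatched vertices. First I would fix the convention that an $M$-alternating path $P$ starting at an $M$-unmatched vertex $u$ begins with a non-matching edge (every edge at $u$ is non-matching), hence alternates non-matching, matching, non-matching, $\dots$; consequently, if $|P|$ is even then $P$ ends with a matching edge and its far endpoint is matched by precisely that edge, and if $|P|$ is odd then $P$ ends with a non-matching edge and its far endpoint is nevertheless matched---otherwise $P$ would be augmenting---by some edge off $P$. I would call a vertex \emph{reached} if it lies on such a $P$; then $E \cup O$ is exactly the set of reached vertices and $U$ its complement, which already yields $U \cap (E \cup O) = \emptyset$.

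To finish (i), I would first prove $E \cap O = \emptyset$. If $v$ were both even and odd, I would splice an even alternating path $P$ from an unmatched $u$ to $v$ with an odd alternating path $Q$ from an unmatched $w$ to $v$: at $v$, $P$ arrives along a matching edge and $Q$ along a non-matching edge, so traversing $P$ from $u$ to $v$ and then $Q$ reversed from $v$ to $w$ gives an $M$-alternating \emph{walk} of odd length between two $M$-unmatched vertices, which are distinct because bipartiteness forces $u$ and $w$ onto opposite sides. A shortcutting argument---pass to a shortest such walk, or cut at the first repeated vertex---then extracts a genuine $M$-augmenting path, contradicting maximality. For independence of the decomposition from the choice of maximum matching, I would establish the intrinsic description $E = \{v : \text{some maximum matching of } G \text{ misses } v\}$: for ``$\subseteq$'', flipping an even alternating path $P$ ending at $v$ produces the maximum matching $M \triangle P$ missing $v$; for ``$\supseteq$'', if $M'$ is a maximum matching missing $v$, the component of $M \triangle M'$ through $v$ is---since neither matching has an augmenting path---an even alternating path from an $M$-unmatched vertex to $v$. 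From this, $O = N(E) \setminus E$ and $U = (A \cup B) \setminus (E \cup N(E))$, both visibly independent of $M$.

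For (ii): a vertex of $O$ left unmatched would be the far end of an odd alternating path from an unmatched vertex---an augmenting path---and a vertex of $U$ left unmatched would be reached by the length-$0$ path, contradicting $U \cap E = \emptyset$; so $M$ saturates $O \cup U$. Next, if $v \in O$ has partner $x = M(v)$, appending $vx$ to an odd alternating path ending at $v$ gives an even alternating path ending at $x$, so $x \in E$; the same one-edge extension rules out an $M$-edge inside $E$ (it would reach an odd vertex lying in $E$) and an $M$-edge between $E$ and $U$ (it would reach a vertex of $U$). Hence under $M$ the vertices of $O$ are matched bijectively to the matched vertices of $E$, and $U$ is matched within itself; counting matched vertices---$|O|$ in $O$, their $|O|$ partners in $E$, and $|U|$ in $U$---gives $|M| = |O| + \frac{|U|}{2}$.

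For (iii): an $E$--$U$ edge cannot lie in $M$ (that would match a $U$-vertex outside $U$, against (ii)); if $vw \notin M$ with $v \in E$, $w \in U$, then appending $vw$ to an even alternating path ending at $v$ reaches $w$ (whether or not $w$ already lay on that path), contradicting $w \in U$. An $E$--$E$ edge in $M$ is excluded exactly as in (ii), and a non-matching $E$--$E$ edge $vw$---necessarily with $v, w$ on opposite sides---is handled by the splice of part (i): concatenate an even alternating path from an unmatched vertex to $v$, the edge $vw$, and an even alternating path from $w$ to an unmatched vertex reversed; the parities match at $v$ and at $w$, producing an odd-length alternating walk between two distinct unmatched vertices, hence an augmenting path. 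I expect the walk-to-augmenting-path extraction shared by (i) and (iii) to be the only genuinely delicate point; it is classical in the bipartite setting (see~\cite{10.5555/233157}), but a careful write-up should isolate it as a separate lemma and spell out the shortcutting rather than gesture at it.
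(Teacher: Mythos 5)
Your proposal is correct in substance, but be aware that the paper itself offers no proof of this lemma: it is stated as known background, with the properties attributed to \cite{10.5555/233157} and their use in \cite{doi:10.1137/16M1076162}. So what you have written is not an alternative to an argument in the paper but a self-contained reconstruction of the classical one. Your route---reduce everything to the nonexistence of $M$-augmenting paths, splice an even and an odd alternating path into an odd alternating walk between two distinct exposed vertices, and obtain matching-independence from the intrinsic description $E=\{v:\text{some maximum matching misses } v\}$ together with $O=N(E)\setminus E$ and $U$ the rest---is exactly the standard treatment, and it buys a self-contained presentation where the paper settles for a citation; the intrinsic description of $E$ is in particular the cleanest way to see that the partition does not depend on the chosen maximum matching. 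The one step that needs the care you yourself flag is the extraction of an augmenting path from an odd alternating walk between two exposed vertices: literally ``cutting at the first repeated vertex'' need not preserve alternation in general, so spell out the bipartite fix---orient every non-matching edge away from the side containing the walk's starting vertex and every matching edge towards it, observe that the alternating walk is then a directed walk, shortcut the directed walk to a directed path (safe, because at each vertex all in-edges have one matching status and all out-edges the other, so deleting a closed detour keeps alternation), and note that a directed, hence alternating, path between two exposed vertices is augmenting. With that lemma isolated as you propose, the rest of your case analysis (saturation of $O\cup U$, partners of $O$-vertices lying in $E$, $U$ matched within $U$, the count $|M|=|O|+\tfrac{|U|}{2}$, and the exclusion of $E$--$E$ and $E$--$U$ edges) is complete and correct, including the boundary cases of length-zero even paths at exposed vertices and the implicit check that the one-edge extensions stay simple, which you gloss over but which cause no trouble.
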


\section{\OneSd}
\label{sec:one-sd}
An instance $I$ of the matchings under preferences of \OneSd problem is defined on a bipartite graph $G^I = (A \cup B, E^I)$, where the vertices $A \cup B$ represent agents. Each agent $a \in A$ has a preference order $\succ_a^I$, which is a linear ordering over its set of neighbours $N_a^I = \{\, b \in B : \{a, b\} \in E^I \,\}$. Agents $b \in B$ does not have any preferences over their neighbours and they abstain from casting vote in an election between any two matchings. The sets $A$ and $B$ denote applicants and jobs, respectively. Since we will later consider multiple instances in parallel, superscripts indicate the corresponding instance; however, they may be omitted when the context is clear. The sets of applicants and jobs remain identical across all instances.

Popularity is defined based on majority preferences between matchings. For any agent $a$, we define their vote between two matchings~$M$ and~$M'$ as
\[
\text{vote}^I_a(M, M') =
\begin{cases}
1 & \text{if } a \text{ prefers } M \text{ to } M',\\
-1 & \text{if } a \text{ prefers } M' \text{ to } M,\\
0 & \text{otherwise.}
\end{cases}
\]
For a set of agents~$N \subseteq A \cup B$, let
\[
\text{vote}^I_N(M, M') = \sum_{a \in N} \text{vote}^I_a(M, M').
\]
The \emph{popularity margin} between~$M$ and~$M'$ is
\[
\Delta^I(M, M') = \text{vote}^I_{A \cup B}(M, M').
\]
A matching~$M$ is \emph{popular} in instance~$I$ if, for every matching~$M'$, we have
\[
\Delta^I(M, M') \ge 0.
\]
We study matchings that remain popular across multiple instances. Consider two instances~$(I, I^{\prime})$ of the \OneSd problem defined on the same set of applicants and jobs. A matching is said to be \emph{robust popular} with respect to~$I$ and~$I^{\prime}$ if it is popular in both instances individually. Consequently, such a matching must be feasible in both instances, i.e., it is a subset of the edge sets of both underlying graphs. Our goal is to study the computational problem of finding robust popular matchings.

\vspace{0.1cm}

\noindent\parbox[t]{0.8\textwidth}{%
    \fbox{
        \parbox{0.95\columnwidth}{
            \RPMOneSd\newline
            \textbf{Input:} Pair $(I, I^{\prime})$ of instances of \OneSd.\\
            \textbf{Question:} Does there exist a robust popular matching with respect to $I$ and $I^{\prime}$?
        }
    }
}
\vspace{0.1cm}

We focus on the setting where the underlying graph remains identical across instances. When~$G^{I} = G^{I^{\prime}}$ and the two instances, $I$ and $I^{\prime}$, differ only in the preference orders of a single agent, we refer to~$I^{\prime}$ as a \emph{perturbed instance} of~$I$, while the set of neighbours for each agent remains the same. The complete algorithmic result is presented in \Cref{sec:algo-for-rpm-onesd}.

\subsection{Cases when Robust Popular matching exists in \OneSd for one agent perturbation}
\label{sec:cases-rpm-onesd}
Consider an instance $I$ of \OneSd. For each applicant $a \in A$, let $f^{I}_{a}$ denote $a$'s most preferred job, and let $F^{I} = \{ f^{I}_{a} : a \in A \}$ be the set of these top-choice jobs. We refer to jobs in $F^{I}$ as \emph{f-jobs} and those in $B \setminus F^{I}$ as \emph{non-f-jobs}. For any $a \in A$, let $r^{I}_{a}$ denote $a$'s most preferred \emph{non-f-job} in their preference list; if all of $a$'s neighbours belong to $F^{I}$, we define $r^{I}_{a} = \infty$.

The following \Cref{thm:one-sd-chrc} provides a characterization of popular matchings in the \RPMOneSd.

\begin{theorem} (from~\cite{doi:10.1137/06067328X}).
\label{thm:one-sd-chrc}
Let $I$ be an instance of the \OneSd, where each~$a \in A$ has a strict preference list. For any matching~$M$ in $G$, $M$ is popular if and only if the following two conditions hold:
\begin{enumerate}[(i)]
    \item $M$ matches every $b \in F^{I}$ to some applicant $a \in A$ such that $b = f^{I}_{a}$;
    \item Each applicant $a$ is matched either to $f^{I}_{a}$ or to $r^{I}_{a}$.
\end{enumerate}
\end{theorem}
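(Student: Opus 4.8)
The plan is to prove both implications, doing the easier ``$\Leftarrow$'' direction first. Throughout, recall that in the \OneSd only applicants vote, so for any two matchings $\Delta^I(M',M)=\sum_{a\in A}\text{vote}^I_a(M',M)$, and an applicant strictly prefers any job to being unmatched.

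For ``$\Leftarrow$'': assume $M$ satisfies (i) and (ii) and fix an arbitrary competitor $M'$. Call an applicant $a$ a \emph{gainer} if $M'(a)\succ_a M(a)$ and a \emph{loser} if $M(a)\succ_a M'(a)$, so $\Delta^I(M',M)=|\text{gainers}|-|\text{losers}|$. If $M(a)=f^I_a$ then $a$ cannot be a gainer; hence every gainer has $M(a)=r^I_a$, and since $M'(a)\succ_a r^I_a$ forces $M'(a)$ to lie above $r^I_a$ in $a$'s list, $b:=M'(a)$ must be an f-job. By condition (i), $M(b)=:g(a)$ satisfies $f^I_{g(a)}=b$, so $g(a)$ is matched in $M$ to its top choice but, since $M'(b)=a\neq g(a)$, to a different job in $M'$; thus $g(a)$ is a loser. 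The map $g$ is injective (from $b=M(g(a))$ one recovers $a$ as the $M'$-partner of $b$), and its image consists of losers whose $M$-partner is an f-job, hence is disjoint from the set of gainers (whose $M$-partner is the non-f-job $r^I_a$). Therefore $|\text{gainers}|\le|\text{losers}|$ and $\Delta^I(M',M)\le 0$; as $M'$ was arbitrary, $M$ is popular.

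For ``$\Rightarrow$'': I would argue the contrapositive, producing in each case a matching $M'$ with $\Delta^I(M',M)>0$. Suppose (i) fails, say f-job $b$ is not matched to any applicant having it first; pick $a_0$ with $f^I_{a_0}=b$ and grow a \emph{promotion path} $a_0,\,b,\,a_1{=}M(b),\,f^I_{a_1},\,a_2{=}M(f^I_{a_1}),\,\dots$, where at each step the next applicant is the $M$-partner of the current f-job and the next f-job is that applicant's top choice, continuing while the current f-job is matched to a non-top applicant and no vertex has been revisited. The walk must stop, for one of three reasons: it reaches an unmatched f-job; it reaches an f-job matched to its own top applicant (who then gets displaced); or an f-job repeats, exposing an alternating cycle. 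In the first case, rotating $M$ along the path makes every applicant on it strictly better off with nobody displaced; in the second case every path applicant improves and exactly one applicant is displaced, but because $b$ itself is \emph{not} matched to a top applicant the path has at least two applicants before any displacement, so the net vote stays positive; in the third case, rotating along the cycle strictly improves at least two applicants (the cycle has length $\ge 4$, as consecutive f-jobs on the path are distinct) and displaces nobody. Either way $\Delta^I(M',M)>0$. If instead (i) holds but (ii) fails, some $a$ has $M(a)\notin\{f^I_a,r^I_a\}$; since (i) holds, $M(a)$ cannot be an f-job, so $M(a)$ lies strictly below $r^I_a$ (or $a$ is unmatched). If $r^I_a$ is unmatched, moving $a$ onto $r^I_a$ already gives a strict gain. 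Otherwise $r^I_a=M(a')$ for some $a'$, and since $r^I_a$ is a non-f-job we have $f^I_{a'}\neq r^I_a$; prepending the moves $a\mapsto r^I_a$ and $a'\mapsto f^I_{a'}$ to a promotion path started from $a'$ (which, as above, never meets $a$ or $r^I_a$ because it only visits f-jobs) again yields at least two improving applicants against at most one displacement, so $\Delta^I(M',M)>0$.

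I expect the main obstacle to be making the promotion-path argument in the ``$\Rightarrow$'' direction fully rigorous: one must argue the walk terminates, enumerate the termination cases cleanly (especially the cycle case), and verify the vote balance in each. This bookkeeping is exactly what the Dulmage--Mendelsohn decomposition of the first-edge graph $G_1=(A\cup B,\{(a,f^I_a):a\in A\})$ organises: popularity forces the first edges used by $M$ to form an $F^I$-saturating (hence maximum) matching of $G_1$, and the required promotion path or cycle can then be read off from the even/odd/unreachable structure of $G_1$, sidestepping the ad hoc termination analysis.
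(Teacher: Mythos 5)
Your statement is one the paper does not prove at all: it is imported verbatim (as Theorem~\ref{thm:one-sd-chrc}) from Abraham, Irving, Kavitha and Mehlhorn~\cite{doi:10.1137/06067328X}, so the only comparison available is with that original source, which organises the argument around the first-choice graph and a reduced graph on the edges $(a,f_a)$ and $(a,r_a)$. Your blind proof is, in substance, correct and self-contained, and it is more elementary than the cited route. The ``$\Leftarrow$'' direction is fully rigorous: the injection $a\mapsto M(M'(a))$ from gainers to losers is exactly the right counting device, using that anything $a$ prefers to $r^I_a$ is an f-job and that, by (i), each f-job's $M$-partner has it as top choice (the parenthetical about disjointness from the gainers is unnecessary, since gainers and losers are disjoint by definition, and is slightly off for gainers with $r^I_a=\infty$). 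The ``$\Rightarrow$'' direction via promotion paths is sound, and your three termination cases with their vote balances ($t{+}1$ vs.\ $0$, $\ge 2$ vs.\ $1$, $\ge 2$ vs.\ $0$) check out; note that under (i) the case-B path collapses immediately, giving the clean $2$-vs-$1$ count. What still needs writing out is exactly the bookkeeping you flagged: (a) when you ``rotate along the path'' you must also delete $a_0$'s old matched edge so that $M\oplus\rho$ is a matching (harmless, since the freed job does not vote); (b) the possibility that $a_0$ itself reappears as some $a_i$ must be folded into the cycle case (it forces the next f-job to be $b$ again, so it is covered, but say so); and (c) the convention $r^I_a=\infty$ for unmatched applicants should be handled explicitly in both directions. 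Your closing observation that the Dulmage--Mendelsohn structure of the first-choice graph systematises this is precisely how the original proof avoids the ad hoc walk analysis; your version buys elementarity at the cost of this case discipline, while the citation (and the original's graph-theoretic framing) buys brevity and reuse in the ties setting.
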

Hence, the only applicants who may remain unmatched in a popular matching are those $a \in A$ with $r^{I}_{a} = \infty$.

If $b_1$ is ranked higher than $b_2$ in $a$'s preference list, we write $b_1 \succ^I_{a} b_2$ in $a$'s list. As we can see from \cite{doi:10.1137/16M1076162} that popular matching does not always exist for \OneSd.

Here we analyse cases when robust popular matching exists between instances $I$ and $I^{\prime}$ of \OneSd where the instances only differ in the preference order of a single applicant $a_{1} \in A$.

We want to ensure that a popular matching $M$ in instance $I$ remains popular in the instance $I^{\prime}$. Let $M$ be a popular matching in $I$. According to \Cref{thm:one-sd-chrc}, $a_1$ is either matched to $f^I_{a_1}$ or matched to $r^I_{a_1}$ in M. So, for $M$ to remain popular in $I^{\prime}$, $M(a_1)$ is either $f^{I^{\prime}}_{a_1}$ or $r^{I^{\prime}}_{a_1}$.
\begin{case}
    Due to perturbation of $a_1$ in $I^{\prime}$, let $f^{I}_{a_1} \neq f^{I^{\prime}}_{a_1}$.
\end{case}
\begin{enumerate}[({1}a)]
        \item When $M(a_1) = r^I_{a_1}$ given that $f^{I}_{a_1} \neq f^{I^{\prime}}_{a_1}$, the condition for $M$ to also remain popular in $I^{\prime}$ is either $f^{I^{\prime}}_{a_1} = r^{I}_{a_1}$ or \[\left( r^{I}_{a_1} = r^{I^{\prime}}_{a_1} ~\&~ \exists a \in A, a \neq a_1 \text{ such that } f^{I}_{a} = f^{I^{\prime}}_{a} = f^{I^{\prime}}_{a_1} ~\&~ M(a) = f^{I^{\prime}}_{a_1} \right)\] We conclude that upon satisfaction of one of the two conditions, the matching $M$ is a robust popular matching in the pair of instances $\left( I, I^{\prime} \right)$ of \OneSd.
        \item When $M(a_1) = f^I_{a_1}$, given that $f^{I}_{a_1} \neq f^{I^{\prime}}_{a_1}$, if $f^{I}_{a_1}=r^{I^{\prime}}_{a_1}$ and \[\left( \exists a \in A, a \neq a_1 \text{ such that } M(a) = f^{I^{\prime}}_{a_1} = f^{I}_{a} = f^{I^{\prime}}_{a}\right) \] then we conclude that $M$ is also popular in $I^{\prime}$ and consequently $M$ is a robust popular matching in $\left( I, I^{\prime} \right)$. 
\end{enumerate}
\begin{case}
    Due to perturbation of $a_1$ in $I^{\prime}$, let $f^{I}_{a_1} = f^{I^{\prime}}_{a_1}$. 
\end{case}
\begin{enumerate}[({2}a)]
    \item When $M(a_1) = r^{I}_{a_1}$, the matching $M$ remains popular in $I^{\prime}$ if $r^{I}_{a_1} = r^{I^{\prime}}_{a_1}$. Otherwise if $r^{I}_{a_1} \neq r^{I^{\prime}}_{a_1}$, then we conclude that there exists no robust popular matching in the instance pair $\left( I, I^{\prime} \right)$.
    \item When When $M(a_1) = f^{I}_{a_1}$, we can conclude that $M$ is also popular in $I^{\prime}$ and there exists a robust popular matching in the instance pair $\left( I, I^{\prime} \right)$. This holds because the single agent $a_1$'s perturbation in $I^{\prime}$ does not violate the popular matching characterization of \Cref{thm:one-sd-chrc} across instances.
\end{enumerate}

\subsection{Algorithm for \RPMOneSd}
\label{sec:algo-for-rpm-onesd}
Our algorithmic result makes use of existing results in the literature. We consider computing a popular matching of an instance $I$ of \OneSd which can be found or can be determined that no such matching exists in $O(\sqrt{n}m)$ time \cite{doi:10.1137/06067328X} where $n$ is the total number of applicants and jobs and $m$ is the total length of all of the preference lists. The formal algorithm is presented at \Cref{algo:rob-mat-1-tie}.
\begin{algorithm}
\caption{\RPMOneSd for perturbation of one agent}
\label{algo:rob-mat-1-tie}
\begin{algorithmic}[1]
\Require Instance $\left( I, I^{\prime} \right)$ of \RPMOneSd
\Ensure \RobPopM for $\left( I, I^{\prime} \right)$ or state that no such matching exists
\If{$I^{\prime}$ doesn't fall under any of the specified cases in \Cref{sec:cases-rpm-onesd}}
    \State \Return ``No robust popular matching exists.'' 
    \Else \State Compute popular matching $M$ for $I$ \Comment{using algo. of \cite{doi:10.1137/06067328X}}
    \State \Return $M$
    \EndIf
\end{algorithmic}
\end{algorithm}

The algorithm first checks whether the perturbed instance $I^{\prime}$ is aligned with the required conditions for the existence of a robust popular matching. We argue the correctness of \Cref{algo:rob-mat-1-tie} in the following theorem.

\begin{theorem}
 \RPMOneSd can be solved in polynomial time if the perturbed instance only differs with respect to the preference order of one agent.   
\end{theorem}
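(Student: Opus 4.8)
The plan is to show that \Cref{algo:rob-mat-1-tie}, suitably argued, is correct and runs in polynomial time. The backbone is the observation that a matching $M$ is robust popular for $(I,I')$ exactly when $M$ is popular in $I$ \emph{and} popular in $I'$, and that popularity in each instance is governed by \Cref{thm:one-sd-chrc}. So the first step is a bookkeeping lemma comparing the relevant quantities across the two instances: since $G^{I}=G^{I'}$ and only $a_1$ reorders its list, $f^{I}_a=f^{I'}_a$ for every $a\neq a_1$; the set $F$ of top-choice jobs changes by at most deleting $f^{I}_{a_1}$ and adding $f^{I'}_{a_1}$ (and only when no other applicant ranks these first); and therefore $r^{I}_a$ can differ from $r^{I'}_a$ only for $a=a_1$ or for applicants whose most preferred non-$f$-job is one of those at most two jobs. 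This isolates the perturbation's footprint to $a_1$ together with a controlled set of other applicants, and lets me rephrase ``popular in $I'$'' as ``popular in $I$, plus a handful of explicit local constraints on those applicants.''

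Next I would verify that the case list of \Cref{sec:cases-rpm-onesd} is exhaustive and that the displayed conditions are tight. Every perturbation lies in Case~1 ($f^{I}_{a_1}\neq f^{I'}_{a_1}$) or Case~2 ($f^{I}_{a_1}=f^{I'}_{a_1}$), and by \Cref{thm:one-sd-chrc} any matching that is popular in $I$ --- if one exists at all --- matches $a_1$ either to $f^{I}_{a_1}$ or to $r^{I}_{a_1}$, which is exactly the split into sub-cases (a) and (b). For each sub-case I would prove both directions against the bookkeeping lemma: \emph{sufficiency}, that when the displayed condition holds, a popular matching of $I$ in which $a_1$ takes the relevant partner still satisfies conditions (i) and (ii) of \Cref{thm:one-sd-chrc} in $I'$, hence is robust popular; and \emph{necessity}, that when none of the displayed conditions holds, every matching popular in $I$ violates (i) or (ii) in $I'$ --- typically because $a_1$'s only admissible partner in $I'$ becomes one that no popular matching of $I$ assigns it --- so no robust popular matching can exist, justifying the algorithm's ``No'' answer.

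A delicate point is that some sub-cases require a popular matching of $I$ in which $a_1$ sits on a \emph{specific} edge, whereas \Cref{algo:rob-mat-1-tie} computes merely \emph{some} popular matching of $I$. To close this gap cleanly I would invoke the polynomial-time algorithm for computing a popular matching that uses one designated edge \cite{10.5555/3288898.3288941}: whenever a sub-case pins $a_1$ to a partner $b$, run that subroutine on $I$ with the edge $\{a_1,b\}$; it either returns a popular matching of $I$ using $\{a_1,b\}$ --- which, by the bookkeeping lemma, is precisely a robust popular matching --- or certifies that none exists, ruling one out. Everything remains polynomial, since computing a popular matching of $I$ (or certifying none exists) takes $O(\sqrt{n}m)$ time by \cite{doi:10.1137/06067328X}, the single-edge variant is likewise polynomial, and deciding which case applies needs only a constant number of comparisons among $f^{I}_{a_1},f^{I'}_{a_1},r^{I}_{a_1},r^{I'}_{a_1}$ plus one pass over the applicants' lists; and if $I$ has no popular matching then trivially $(I,I')$ has no robust popular matching.

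I expect the real obstacle to be the \emph{necessity} half of the case analysis, compounded by the downstream effect of the change in $F$ on the $r$-values of the other applicants: one must argue that \emph{no} reshuffling of a popular matching of $I$ --- including ones that move the partners of the controlled applicants whose $r$-value shifted --- can restore popularity in $I'$, i.e., that the enumerated conditions genuinely exhaust the ways robustness can be achieved. This is a structural argument about the whole family of popular matchings of $I$ rather than a single one, and the bookkeeping lemma of the first step is what keeps it finite and checkable.
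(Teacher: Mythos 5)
Your overall route is the same as the paper's: rely on the characterization of \Cref{thm:one-sd-chrc}, the case analysis of \Cref{sec:cases-rpm-onesd}, and the $O(\sqrt{n}m)$ popular-matching algorithm of \cite{doi:10.1137/06067328X}. You actually go further than the paper's short argument in two useful ways: you notice that the change of the top-choice set $F$ can shift the $r$-values of applicants other than $a_1$, and you notice that the case conditions pin $a_1$ to a \emph{specific} partner while \Cref{algo:rob-mat-1-tie} only computes an arbitrary popular matching of $I$, so some designated-edge subroutine is needed. Both observations point at real issues that the paper's proof passes over silently.

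However, as written your argument is a plan rather than a proof. The load-bearing step --- that the conditions listed in \Cref{sec:cases-rpm-onesd} are exhaustive, i.e., that when none of them holds \emph{no} popular matching of $I$ can remain popular in $I^{\prime}$ --- is only announced (``I would prove both directions'', ``I expect the real obstacle to be the necessity half'') and never carried out; this is precisely what the theorem requires beyond quoting \Cref{thm:one-sd-chrc}, and it is the content the paper delegates to its case discussion. In addition, the tool you import to force the edge, the popular-matching-with-one-designated-edge algorithm of \cite{10.5555/3288898.3288941}, is formulated for the two-sided model with strict preferences on both sides, so it does not apply verbatim to the one-sided model considered here. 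You do not need it: by \Cref{thm:one-sd-chrc} the only admissible partners of $a_1$ are $f^{I}_{a_1}$ and $r^{I}_{a_1}$, so forcing an edge $\{a_1,b\}$ reduces to checking $b \in \{f^{I}_{a_1}, r^{I}_{a_1}\}$ and then testing, within the f-job/r-job structure of \cite{doi:10.1137/06067328X}, whether a popular matching exists that matches $a_1$ to $b$ (a bipartite matching feasibility test). With that substitution, and with the necessity direction actually written out rather than promised, your outline would become a correct --- and indeed more careful --- version of the paper's argument.
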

\begin{proof}
Correctness of the algorithm directly follows from the correctness of \Cref{thm:one-sd-chrc} and Lemma 3.8 from \cite{doi:10.1137/06067328X} which says that a popular matching can be found, or it can be determined that no such matching exists, in $O(\sqrt{n}m)$ time. So, we can ensure that if at least one Robust popular matching exists for $\left( I, I^{\prime} \right)$, \Cref{algo:rob-mat-1-tie} returns that in polynomial time and the matching returned by the algorithm is indeed a popular matching because it satisfies the characterization of \Cref{thm:one-sd-chrc}.  
\end{proof}

\section{\TwoSd}
\label{sec:two-sd-des}
Unlike the \OneSd, here we have an instance of \TwoSd, where we have a bipartite graph $G^I = \left( A \cup B, E^I \right)$ where each $a \in A$ ranks its neighbours $N^I_a = \{ b \in B : \{a, b\} \in E^I \}$ strictly with the preference order $\succ^I_a$ and each $b \in B$ puts all of its neighbours into a single large tie, i.e., each such $b$ prefers all of its neighbours equally. In this \TwoSd, when two matchings compete in a majority election, $b \in B$ abstains from casting its vote for a matching when both the matchings saturate $b$ or both leave $b$ unsaturated; else $b$ prefers the matching where it is saturated. Cseh et al. \cite{doi:10.1137/16M1076162} showed that \TwoSd does not necessarily follow the special characterization provided in \Cref{thm:one-sd-chrc}. Thus, popular matchings in the \TwoSd differ significantly from the \OneSd. Cseh et al. \cite{doi:10.1137/16M1076162} have developed an algorithm to find a popular matching in \TwoSd which runs in time $O(n^2)$. Their algorithm is based on the Dulmage--Mendelsohn Decomposition (See \Cref{sec:Prelim} for related definitions). In the next section we show some structural results related to popular matchings in \TwoSd.

\section{Structural Results for \TwoSd}
\label{sec:two-sd-struc}
Let $G = (A \cup B, E)$ be an instance of \TwoSd. For any vertex $u \in A$ and any pair of neighbours $v, w \in N_u$, let $\text{vote}_u(v, w) = 1$ if $u$ prefers $v$ to $w$, it is -1 if $u$ prefers $w$ to $v$ and 0 otherwise in case of $v = w$.
For any vertex $b \in B$ and any pair of neighbours $c, d \in N_b$, $\text{vote}_b(c, d) = 0$ always due to the way the graph $G$ orders its preferences on $b \in B$, i.e., every $b \in B$ prefers all of $N_{b}$ equally.
Let $M$ be any matching in $G$. For each edge $e = (u, v) \in E \setminus M$, we associate a pair of values $(\alpha_e, \beta_e)$ defined as follows:
\[
\alpha_e = \operatorname{vote}_u(v, M(u)) \quad \text{and} \quad \beta_e = \operatorname{vote}_v(u, M(v)),
\]
where $\alpha_e$ represents $u$'s vote between $v$ and its current partner $M(u)$, and $\beta_e$ represents $v$'s vote between $u$ and its current partner $M(v)$. If a vertex $u$ is unmatched under $M$, then for every neighbour $v$ of $u$, we set $\operatorname{vote}_u(v, M(u)) = 1$.

\medskip
The following \Cref{thm:two-side-popular-chrc} is a slight modification of the characterization of popular matchings in an instance of two-sided strict preference lists proved in \cite{10.5555/2027127.2027198}. Hence, \Cref{thm:two-side-popular-chrc} characterizes popular matchings in an instance of \TwoSd.
\begin{theorem}
\label{thm:two-side-popular-chrc}
A matching $M$ in $G$ is \emph{popular} if and only if the following three conditions hold in the graph $G$:
\begin{enumerate}[(i)]
    \item There is no alternating cycle with respect to $M$ that contains a $(1, 0)$ edge.
    \item There is no alternating path starting from an unmatched vertex with respect to $M$ that contains a $(1, 0)$ edge.
    \item There is no alternating path with respect to $M$ that contains two or more $(1, 0)$ edges.
\end{enumerate}
\end{theorem}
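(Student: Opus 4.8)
The plan is to reduce the problem to the known characterization of popular matchings under two-sided \emph{strict} preferences (the theorem of Huang and Kavitha, \cite{10.5555/2027127.2027198}), and then argue that the one-sided ties collapse the relevant edge-label cases. Recall that in the strict-preference setting, each non-matching edge $e = (u,v)$ receives a label in $\{-1,+1\}^2$ recording the two endpoints' votes, and a matching $M$ is popular if and only if there is no alternating path or cycle with respect to $M$ whose labels ``beat'' $M$ in the sense of the LP-duality/combinatorial argument there; concretely, $M$ is popular iff $G$ has no alternating cycle containing a $(+1,+1)$ edge, no alternating path from an unmatched vertex containing a $(+1,+1)$ edge, and no alternating path containing two $(+1,+1)$ edges. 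First I would make explicit that, because every $b \in B$ is indifferent among all neighbours, the endpoint-vote $\beta_e = \operatorname{vote}_v(u, M(v))$ can never be $+1$ when $v \in B$ is matched: it is $0$ if $v$ is matched (both $u$ and $M(v)$ are tied), and $+1$ only if $v$ is unmatched under $M$. So among matched $B$-vertices, every non-matching edge has $\beta_e = 0$, i.e.\ the only attainable labels on such edges are $(+1,0)$ and $(-1,0)$. This is precisely why the roles of $(+1,+1)$ in the strict case are played by $(1,0)$ edges here.

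The main technical content is to re-run the popularity argument with this relabelling and handle unmatched $B$-vertices carefully. I would proceed as follows. For the ``only if'' direction, suppose one of the three conditions fails; I would exhibit an augmenting/alternating structure $\rho$ (a cycle, or a path, appropriately), form $M' = M \oplus \rho$, and count votes: along $\rho$, a matched agent $u \in A$ contributes $+1$ for each $(1,\cdot)$ edge incident to it in $\rho$ on its ``gain'' side and $-1$ on its ``loss'' side, a matched $b \in B$ contributes $0$ throughout (indifference), and the endpoints of a path contribute $+1$ if they become matched. A single $(1,0)$ edge on a cycle gives a net vote in favour of $M'$; a $(1,0)$ edge on a path out of an unmatched vertex, or two $(1,0)$ edges on an arbitrary path, likewise give $\Delta(M,M') < 0$ (using that the unmatched endpoint votes $+1$ for $M'$). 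Hence violating any condition refutes popularity. For the ``if'' direction I would use the standard LP-dual certificate: assign each vertex a weight in $\{-1,0,+1\}$ (agents by the usual rule from the alternating-path structure, with $B$-vertices receiving weight $0$ exactly because their votes are identically $0$), and verify that the conditions (i)--(iii) guarantee that every edge, matching or not, satisfies the dual feasibility inequality, so that $M$ is popular by weak duality. The indifference of $B$ makes the dual constraints on edges into $B$ automatically slack, which is the key simplification over \cite{10.5555/2027127.2027198}.

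The step I expect to be the main obstacle is the bookkeeping around unmatched vertices and the precise form of the dual certificate: one must show that setting all $B$-weights to $0$ is consistent, that the agent-weights can always be chosen in $\{-1,0,+1\}$ so that $\sum_{v} \text{weight}(v) = 0$ (or $\le 0$) while every edge constraint holds, and that conditions (i)--(iii) are exactly the obstructions to building such a certificate. In particular, condition (iii) — no alternating path with two $(1,0)$ edges — is the subtle one: it rules out the ``double-improving'' path that would otherwise let $M'$ beat $M$ even though no single augmenting move does, and verifying that this is the last remaining obstruction (i.e.\ that (i)--(iii) together are sufficient) requires the careful potential/dual argument rather than a direct exchange argument. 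Once the $B$-vertices are shown to be inert, though, the argument is a faithful specialization of the strict two-sided proof, so I would present the reduction cleanly and then point to \cite{10.5555/2027127.2027198} for the routine parts of the duality bookkeeping.
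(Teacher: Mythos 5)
The necessity (``only if'') half of your plan has a concrete flaw. You claim that flipping along an alternating cycle containing a single $(1,0)$ edge yields a matching that beats $M$. That count ignores the $(-1,0)$ edges on the same cycle, each of which contributes $-1$ to the vote sum, and in this model they are not pruned away: in the strict two-sided setting of Huang and Kavitha the three conditions are stated in the subgraph obtained by deleting $(-1,-1)$ edges, so every surviving non-matching edge has vote-sum $\ge 0$ and one $(1,1)$ edge forces a strictly positive total, whereas here the conditions are stated in $G$ itself and a $(-1,0)$ edge has vote-sum $-1$. Concretely, take $A=\{a_1,a_2\}$, $B=\{b_1,b_2\}$ with all four edges, both $a_1$ and $a_2$ ranking $b_1 \succ b_2$, and $M=\{(a_1,b_2),(a_2,b_1)\}$. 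The alternating cycle through $a_1,b_1,a_2,b_2$ contains the $(1,0)$ edge $(a_1,b_1)$, yet $M$ is popular: the only other perfect matching ties with $M$, and every smaller matching loses. So popularity does not imply condition (i) as literally stated, and no exchange argument of the kind you sketch can establish the converse; the paper itself omits that direction (calling it trivial), and your difficulty is genuine --- a correct necessity statement needs the conditions interpreted in a suitably pruned graph, exactly as in the strict-preference source, or a different formulation altogether.

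For the sufficiency direction you propose an LP-dual/witness certificate with all $B$-weights set to $0$, deferring the feasibility verification to the duality bookkeeping of the strict case. That is a genuinely different route from the paper, which argues directly: decompose $M \oplus M'$ into alternating paths and cycles, note that every non-matching edge whose endpoints are matched is labelled $(1,0)$ or $(-1,0)$, and use (i)--(iii) to bound each component's contribution by $0$ (cycles, and paths with an endpoint free in $M$, contain no $(1,0)$ edge, so each $M'$-edge contributes $-1$; a path whose two endpoints are unmatched in $M'$ loses $2$ from those endpoints and gains at most $1$ from the single $(1,0)$ edge permitted by (iii)). The paper's argument is elementary and complete for this direction, whereas in your plan the key step --- showing that (i)--(iii) actually allow a feasible certificate with vertex weights in $\{-1,0,+1\}$ --- is precisely what you defer to ``routine bookkeeping'', so as written the sufficiency half is a plan rather than a proof. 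Your structural observation that one-sided ties force $\beta_e \in \{0,+1\}$, with $(1,0)$ playing the role of $(1,1)$, matches the paper's; but to complete the argument you should either carry out the certificate construction in full or switch to the direct component-by-component count.
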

\begin{proof}
Suppose $M$ is any matching in $G$ that satisfies conditions $(i)-(iii)$ of \Cref{thm:two-side-popular-chrc}. Let $M'$ be any other matching in $G$. Define
\[
\Delta(M', M) = \sum_{u \in A \cup B} \operatorname{vote}_u(M'(u), M(u)),
\]
where $\operatorname{vote}_u(M'(u), M(u))$ denotes $u$'s preference between its partners in $M'$ and $M$.  
The quantity $\Delta(M', M)$ represents the net difference in votes between $M'$ and $M$:  
it measures how many vertices prefer $M'$ over $M$ minus those that prefer $M$ over $M'$.

\medskip

Note that either $M(u)$ or $M'(u)$ can denote the state of being \emph{unmatched}, which is considered the least-preferred option for every vertex $u$.  
Hence, we write $M' \succ M$ if and only if $\Delta(M', M) > 0$.

\medskip

We will now show that $\Delta(M', M) \le 0$ for all matchings $M'$, which implies that $M$ is a popular matching.

\medskip

To compute $\sum_{u} \operatorname{vote}_u(M'(u), M(u))$, we mark each edge $e = (u, v)$ in $M'$ with the pair
\[
(\alpha_e, \beta_e) = \big(\operatorname{vote}_u(v, M(u)),\ \operatorname{vote}_v(u, M(v))\big).
\]
If $\alpha_e = \beta_e = -1$, then both $u$ and $v$ prefer their respective partners in $M$ over each other.  
In this case, we can equivalently assume that $M'$ leaves both $u$ and $v$ unmatched,  
since deleting $(u, v)$ from $M'$ does not change $\operatorname{vote}_u(M'(u), M(u))$ or $\operatorname{vote}_v(M'(v), M(v))$,  
as both values remain $-1$. But that is not the case of \TwoSd because each $b \in B$ prefers its neighbours equally.

\medskip

Let $\rho$ denote any connected component of the symmetric difference $M \oplus M'$.  
Then we can express
\[
\Delta(M', M) = \sum_{\rho} \sum_{u \in \rho} \operatorname{vote}_u(M'(u), M(u)),
\]
where the outer summation runs over all components $\rho$ of $M \oplus M'$.  
Vertices that are isolated in $M \oplus M'$ satisfy $M(u) = M'(u)$ and thus contribute $0$ to $\Delta(M', M)$.  
Hence, we only need to consider those components $\rho$ that contain at least two vertices.

\medskip

Each such component $\rho$ of $M \oplus M'$ is either an alternating cycle or an alternating path with respect to $M$.  
Moreover, for every vertex $u \in \rho \cap A$, we have $\operatorname{vote}_u(M'(u), M(u)) \in \{+1, -1\}$ and for vertices $b \in \rho \cap B$, we have $\operatorname{vote}_b(M'(b), M(b)) \in \{ 0 \}$.

\begin{case}
    Let $\rho$ be a cycle in $M \oplus M'$.
\end{case}
For every edge $e \in \rho$, $\left( \alpha_e, \beta_e \right)$ is either $(1,0)$ or $(-1,0)$. Since every vertex of $\rho$ is matched by $M^{\prime}$, we have
\begin{equation}
    \label{eqn:two-chrc-pop-mat-1}
    \sum\limits_{u \in \rho} \operatorname{vote}_{u}\left( M^{\prime}(u), M(u) \right) = \sum\limits_{e=(u,v)\in\rho \cap M^{\prime}} \alpha_e + \beta_e
\end{equation}
where $\alpha_e = \operatorname{vote}_u(v, M(u))$, and $\beta_e = \operatorname{vote}_v(u, M(v))$. Note that for every edge $e \in \rho$, $(\alpha_e,\beta_e)$ is either $(1,0)$ or $(-1,0)$. But we are given that $M$ satisfies condition $(i)$ of \Cref{thm:two-side-popular-chrc}. Hence, there is no $(1,0)$ edge in $\rho$. Thus for each edge $e \in \rho \cap M^{\prime}$, $\alpha_e + \beta_e = -1$ and hence $\sum\limits_{e \in \rho \cap M^{\prime}} \alpha_e + \beta_e \leqslant 0$.
\begin{case}
    Let $\rho$ be a path in $M \oplus M'$.
\end{case}
\begin{itemize}
    \item Suppose both the endpoints of $\rho$ are matched in $M^{\prime}$.
    In that case, \Cref{eqn:two-chrc-pop-mat-1} holds here. Since an endpoint of $\rho$ is free in $M$, by condition $(ii)$ of \Cref{thm:two-side-popular-chrc}, we have no $(1,0)$ edge with respect to $M$ in $\rho$. Thus for each edge $e \in \rho \cap M^{\prime}$, $\alpha_e + \beta_e = -1$ and hence $\sum\limits_{e \in \rho \cap M^{\prime}}\alpha_e + \beta_e \leqslant 0$.
    \item Suppose exactly one endpoint of $\rho$ is matched in $M^{\prime}$. Then 
    \begin{equation}
    \label{eqn:two-chrc-pop-mat-2}
        \sum\limits_{u \in \rho}\operatorname{vote}_u\left( M^{\prime}(u), M(u) \right) = -1 + \sum\limits_{e = (u, v) \in \rho \cap M^{\prime}}\alpha_e + \beta_e
    \end{equation}
    The term ``-1'' in the \Cref{eqn:two-chrc-pop-mat-2} is due to the fact that one vertex that is matched in $M$ but not in $M^{\prime}$, prefers $M \succ M^{\prime}$. Here too an endpoint of $\rho$ is free in $M$, and so by condition $(ii)$ of \Cref{thm:two-side-popular-chrc}, there is no $(1,0)$ edge with respect to $M$ in $\rho$. So, $\forall e \in \rho \cap M^{\prime}$, $\alpha_e + \beta_e = -1$ and consequently, $\sum\limits_{u \in \rho}\operatorname{vote}_{u}\left( M^{\prime}(u), M(u) \right) \leqslant -1 \leqslant 0$ here.
    \item Suppose neither endpoint of $\rho$ is matched in $M^{\prime}$. Then
    \begin{equation}
    \label{eqn:two-chrc-pop-mat-3}
        \sum\limits_{u \in \rho}\operatorname{vote}_u\left( M^{\prime}(u), M(u) \right) = -2 + \sum\limits_{e = (u, v) \in \rho \cap M^{\prime}}\alpha_e + \beta_e
    \end{equation}
    The term ``-2'' in the \Cref{eqn:two-chrc-pop-mat-3} is due to the fact that two vertices are matched in $M$ but not in $M^{\prime}$ and both of them prefer $M \succ M^{\prime}$. Here we take into consideration the condition $(iii)$ of \Cref{thm:two-side-popular-chrc}. There can be at most one $(1,0)$ edge with respect to $M$ in $\rho$. Thus, except for at most one edge $e$ in $\rho \cap M^{\prime}$, we have $\alpha_e + \beta_e = -1$. Hence, $\sum\limits_{e=(u,v)\in \rho \cap M^{\prime}} \alpha_e + \beta_e \leqslant 1$, thus $\sum\limits_{u \in \rho}\operatorname{vote}_{u}\left( M^{\prime}(u), M(u) \right) \leqslant 0$.
\end{itemize}
So, for each component $\rho$ of $M \oplus M^{\prime}$, $\sum\limits_{u \in \rho}\operatorname{vote}_{u}\left( M^{\prime}(u), M(u) \right) \le 0$. Thus, it follows that $\Delta(M^{\prime}, M) \le 0$. Hence, if $M$ satisfies \Cref{thm:two-side-popular-chrc}, then $M$ is popular in $G$.

The converse direction proof is omitted due to the triviality.
\end{proof}

In the next section we present our algorithmic results to decide whether robust popular matching exists across multiple instances of \TwoSd. Naturally we ask the following question and we propose an efficient solution of the computational problem eventually.

\vspace{0.1cm}

\noindent\parbox[t]{0.8\textwidth}{%
    \fbox{
        \parbox{0.95\columnwidth}{
            \RPMTwoSd\newline
            \textbf{Input:} Pair $(I, I^{\prime})$ of instances of \TwoSd.\\
            \textbf{Question:} Does there exist a robust popular matching with respect to $I$ and $I^{\prime}$?
        }
    }
}
\vspace{0.1cm}

\section{Algorithm for \RPMTwoSd for Perturbations of One Agent}
\label{sec:two-sd-algo}

We begin by examining instances of \TwoSd where the two input instances share the same underlying graph, and the only differences arise from the preference order of a single agent. Our main contribution in this setting is a polynomial-time algorithm for determining the existence of a robust popular matching.

The algorithm relies on two key steps. First, we introduce the notion of \emph{hybrid instances}, which allow us to test whether a robust popular matching exists that includes a specific edge incident to the perturbing agent. Second, we address the case where the agent with altered preferences remains unmatched in a robust popular matching. By combining these insights with existing results on robust popular matchings \cite{10.5555/3635637.3662870}, we obtain an efficient solution.

\subsection{Hybrid Instances}

Consider an instance pair $(I, I^{\prime})$ of \TwoSd where $I^{\prime}$ differs from $I$ only in the preferences of a single agent $a \in A$. Let $G = (A \cup B, E)$ be the common underlying graph, and let $e = \{a, b\} \in E$ be an edge incident to $a$. Define:
\begin{itemize}
    \item $P^I = \{z \in A \cup B : z \succ_a^{I} b\}$
    \item $P^{I^{\prime}} = \{z \in A \cup B : z \succ_a^{I^{\prime}} b\}$
\end{itemize}
That is, $P^I$ and $P^{I^{\prime}}$ are the sets of agents that $a$ prefers over $b$ in $I$ and $I^{\prime}$, respectively.

We construct a \emph{hybrid instance} $\mathcal{H}_e$ as follows. The preference order of $a$ in $\mathcal{H}_e$ is any linear order $\succ_a^{\mathcal{H}_e}$ that satisfies:
\begin{itemize}
    \item $z \succ_a^{\mathcal{H}_e} b$ for all $z \in P^I \cup P^{I^{\prime}}$,
    \item $y \succ_a^{\mathcal{H}_e} z$ for all $z \in N_a \setminus (P^I \cup P^{I^{\prime}} \cup \{b\})$,
\end{itemize}
with the relative order of agents in $P^I \cup P^{I^{\prime}}$ and in $N_a \setminus (P^I \cup P^{I^{\prime}} \cup \{b\})$ being arbitrary. The preferences of all other agents in $\mathcal{H}_e$ remain unchanged from $I$.

The following lemmas establish a connection between popular matchings in $\mathcal{H}_e$ and robust popular matchings in $(I, I^{\prime})$.

\begin{lemma}
\label{lem:hybrid-popular-implies-robust}
Let $M$ be a matching such that $e \in M$ and $a \in e$. If $M$ is popular in $\mathcal{H}_e$, then it is popular in both $I$ and $I^{\prime}$.
\end{lemma}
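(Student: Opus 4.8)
The plan is to use the characterization of popular matchings in \Cref{thm:two-side-popular-chrc}. Since $M$ contains the edge $e=\{a,b\}$, the agent $a$ is matched to $b$ in all three instances $\mathcal{H}_e$, $I$, and $I'$; moreover $M$ is a feasible matching in each of these instances because they share the underlying graph $G$. The only data that varies among the three instances is the preference order of $a$, so the labels $(\alpha_f,\beta_f)$ of an edge $f\in E\setminus M$ can change only if $a\in f$. Thus it suffices to compare, for each neighbour $z\in N_a\setminus\{b\}$ and the edge $f=\{a,z\}\notin M$, the value $\alpha_f = \operatorname{vote}_a(z,b)$ in the three instances.

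The key observation is the following monotonicity: by the construction of $\mathcal{H}_e$, every $z$ that $a$ prefers to $b$ in $I$ (i.e.\ $z\in P^I$) or in $I'$ (i.e.\ $z\in P^{I'}$) is also preferred to $b$ in $\mathcal{H}_e$. Equivalently, $\{z : z\succ_a^{I} b\}\subseteq\{z:z\succ_a^{\mathcal{H}_e}b\}$ and likewise for $I'$. Hence, for the edge $f=\{a,z\}$, if $\alpha_f = 1$ in $I$ (or in $I'$) — meaning $a$ prefers $z$ to $b=M(a)$ — then also $\alpha_f=1$ in $\mathcal{H}_e$. Contrapositively, if $\alpha_f=-1$ in $\mathcal{H}_e$, then $\alpha_f=-1$ in both $I$ and $I'$. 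In words: passing from $\mathcal{H}_e$ to $I$ or to $I'$ can only turn a $(1,0)$ edge incident to $a$ into a $(-1,0)$ edge, never the other way around, and edges not incident to $a$ keep their labels. Consequently, the set of $(1,0)$ edges in $I$ (and in $I'$) is a subset of the set of $(1,0)$ edges in $\mathcal{H}_e$.

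Now I would conclude: since the alternating cycles and alternating paths with respect to $M$ depend only on $M$ and $G$ (which are common to all three instances), any alternating cycle, alternating path from an unmatched vertex, or alternating path containing two $(1,0)$ edges that witnesses non-popularity of $M$ in $I$ (respectively $I'$) would contain only edges that are also $(1,0)$ edges in $\mathcal{H}_e$ — and would therefore witness non-popularity of $M$ in $\mathcal{H}_e$ as well, contradicting the hypothesis that $M$ is popular in $\mathcal{H}_e$. Checking each of the three conditions of \Cref{thm:two-side-popular-chrc} in turn, $M$ satisfies all of them in $I$ and in $I'$, hence $M$ is popular in both.

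\textbf{Main obstacle.} The crux is verifying the label-monotonicity claim cleanly, in particular handling the edge $e$ itself (which is in $M$, so it carries no $(\alpha,\beta)$ label and poses no issue) and being careful that the $\beta$-components, which record the vote of vertices in $B$, are identically $0$ in all instances and thus never obstruct the argument — so the entire comparison reduces to the $\alpha$-components of edges incident to $a$. Once this bookkeeping is in place, the containment of $(1,0)$-edge sets and the appeal to \Cref{thm:two-side-popular-chrc} are routine.
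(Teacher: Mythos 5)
Your label-monotonicity observation is the right ingredient, but the route through \Cref{thm:two-side-popular-chrc} has a genuine gap: your very first step, ``$M$ popular in $\mathcal{H}_e$ implies conditions (i)--(iii) hold in $\mathcal{H}_e$,'' is the \emph{converse} direction of that theorem, which the paper never proves (it is explicitly omitted) and which in fact fails in this model. Concretely, take $A=\{a_1,a_2\}$, $B=\{b_1,b_2\}$ with all four edges present, $b_2 \succ_{a_1} b_1$, $b_2 \succ_{a_2} b_1$, and $b_1,b_2$ each indifferent among their neighbours. The matching $M=\{(a_1,b_1),(a_2,b_2)\}$ is popular (the only other perfect matching ties with $M$, margin $0$, and every smaller matching does strictly worse), yet the alternating cycle $a_1\text{--}b_1\text{--}a_2\text{--}b_2\text{--}a_1$ with respect to $M$ contains the $(1,0)$ edge $(a_1,b_2)$, violating condition (i). So popularity does not entitle you to the absence of the forbidden structures, and your chain (popular in $\mathcal{H}_e$ $\Rightarrow$ conditions in $\mathcal{H}_e$ $\Rightarrow$ conditions in $I,I'$ $\Rightarrow$ popular in $I,I'$) breaks at the first link; only the last implication is backed by the direction of the characterization actually proved in the paper. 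Note that running the argument contrapositively does not help either: from ``$M$ not popular in $I$'' you would again need the unproven necessity direction to produce a witness structure transferable to $\mathcal{H}_e$.

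The repair is short, because your monotonicity fact can be applied directly to popularity margins, which is what the paper's proof does. For any matching $M'$, every vertex $z\neq a$ votes identically in $I$, $I'$ and $\mathcal{H}_e$, and since $M(a)=b$ and the set of neighbours that $a$ ranks above $b$ in $\mathcal{H}_e$ contains the corresponding sets for $I$ and $I'$, one gets $\operatorname{vote}^{I}_a(M,M')\ \geq\ \operatorname{vote}^{\mathcal{H}_e}_a(M,M')$ (and likewise for $I'$). Summing over all vertices yields $\Delta^{I}(M,M')\geq \Delta^{\mathcal{H}_e}(M,M')\geq 0$ for every $M'$, proving the lemma without invoking \Cref{thm:two-side-popular-chrc} at all; your containment of $(1,0)$-edge sets is exactly this vote inequality in disguise.
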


\begin{proof}
We assume that $M$ is popular in $\mathcal{H}_e$. Due to symmetry of $\mathcal{H}_e$ with respect to both $I$ and $I^{\prime}$, we only show popularity in $I$.

For any other matching $M'$, we determine the popularity margin between $M$ and $M'$ in the input instance $I$ by exhaustively considering the votes of all of $A\cup B$. The vote of any agent $z \in (A \cup B) \setminus \{a\}$ is the same in $I$ and $\mathcal{H}_e$, i.e., $\operatorname{vote}^I_z(M,M')=\operatorname{vote}^{\mathcal{H}_e}_z(M,M')$. For agent $a$, since $M(a) = b$ and $\succ_a^{I}$ ranks $b$ no higher than $\succ_a^{\mathcal{H}_e}$, i.e., $b \succ^I_a z$ whenever $b \succ^{\mathcal{H}_e}_a z$ where $z \in N_a$, we have $\text{vote}_a^{I}(M, M') \geq \text{vote}_a^{\mathcal{H}_e}(M, M')$. Thus, \begin{eqnarray}
    &\Delta^{I}(M, M') = \sum\limits_{z \in A \cup B} \operatorname{vote}^I_z (M, M^{\prime}) \nonumber \\ 
    &\geq \sum\limits_{z \in A \cup B} \operatorname{vote}^{\mathcal{H}_e}_z (M, M^{\prime}) = \Delta^{\mathcal{H}_e}(M, M') \geq 0. \nonumber
\end{eqnarray}
\end{proof}

\begin{lemma}
\label{lem:robust-implies-hybrid-popular}
Let $M$ be a matching such that $e \in M$ and $a \in e$. If $M$ is popular in both $I$ and $I^{\prime}$, then it is popular in $\mathcal{H}_e$.
\end{lemma}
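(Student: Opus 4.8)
The plan is to prove the contrapositive-flavoured statement directly: assuming $M$ is popular in both $I$ and $I^{\prime}$ with $e=\{a,b\}\in M$, I will show $M$ satisfies the three structural conditions of \Cref{thm:two-side-popular-chrc} in the hybrid instance $\mathcal{H}_e$. The only vertex whose preferences differ between $I$, $I^{\prime}$, and $\mathcal{H}_e$ is $a$, and in all three instances $M(a)=b$; moreover the labels $(\alpha_e,\beta_e)$ of an edge with respect to $M$ depend only on how the two endpoints compare their $M$-partner against the other endpoint. So the key observation to establish first is: for any edge $f=\{a,z\}$ incident to $a$ (with $z\neq b$), its label in $\mathcal{H}_e$ is a $(1,\cdot)$ edge (i.e.\ $\operatorname{vote}_a^{\mathcal{H}_e}(z,b)=1$) if and only if $z\in N_a\setminus(P^I\cup P^{I^{\prime}}\cup\{b\})$, which by construction of $\succ_a^{\mathcal{H}_e}$ happens exactly when $z\succ_a^{I} b$ is \emph{false} and $z\succ_a^{I^{\prime}} b$ is \emph{false}, i.e.\ when $f$ is a $(1,\cdot)$ edge with respect to $M$ in \emph{both} $I$ and $I^{\prime}$. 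Conversely, if $f$ is a $(-1,\cdot)$ edge in at least one of $I,I^{\prime}$, then $z\in P^I\cup P^{I^{\prime}}$, so $f$ is $(-1,\cdot)$ in $\mathcal{H}_e$ as well. All edges not incident to $a$ carry the same label in $\mathcal{H}_e$ as in $I$ (equivalently in $I^{\prime}$, since they don't touch $a$).

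With this labelling dictionary in hand, I would go through the three forbidden configurations. Take any alternating cycle or path $C$ (with respect to $M$) in $G$ that is a witness of a violation of condition (i), (ii), or (iii) in $\mathcal{H}_e$; it contains one or two $(1,0)$ edges. Every $(1,0)$ edge of $C$ in $\mathcal{H}_e$ is either not incident to $a$ — in which case it is a $(1,0)$ edge in $I$ as well — or it is the edge $\{a,z\}$ for some $z$, which by the dictionary is then a $(1,0)$ edge in $I$ \emph{and} in $I^{\prime}$. Either way, $C$ witnesses the same violation (same number of $(1,0)$ edges, same endpoint-free status, which is unchanged since $M$ is fixed) in instance $I$ if the offending $a$-edge is re-read in $I$, and likewise in $I^{\prime}$. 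Hence a violation of \Cref{thm:two-side-popular-chrc} for $M$ in $\mathcal{H}_e$ transports to a violation in $I$ (or, if one prefers symmetry, in $I^{\prime}$), contradicting popularity of $M$ there. Therefore $M$ satisfies all three conditions in $\mathcal{H}_e$ and, by \Cref{thm:two-side-popular-chrc}, is popular in $\mathcal{H}_e$.

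A small subtlety I would be careful about: an alternating path/cycle $C$ that passes through $a$ uses exactly the one edge of $M$ at $a$, namely $e=\{a,b\}$, plus at most one non-matching edge $\{a,z\}$ at $a$; so at most one of the $(1,0)$ edges of any witness is the "ambiguous" $a$-edge, and the argument never needs to reconcile two different $a$-edges simultaneously. Also, since $M$ is the same matching throughout, the set of $M$-free vertices and the combinatorial type (cycle vs.\ path, and how many endpoints are $M$-free) of each component of any $M\oplus M'$-style witness is identical across $I$, $I^{\prime}$, $\mathcal{H}_e$; only edge labels can change, and only on edges at $a$, in the controlled way described.

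I expect the main obstacle to be purely bookkeeping: pinning down precisely the biconditional "$\{a,z\}$ is $(1,0)$ in $\mathcal{H}_e$ $\iff$ it is $(1,0)$ in both $I$ and $I^{\prime}$" from the definition of $\succ_a^{\mathcal{H}_e}$ (the roles of $P^I$, $P^{I^{\prime}}$, and $N_a\setminus(P^I\cup P^{I^{\prime}}\cup\{b\})$), and then verifying that a $(1,0)$ edge at $a$ really does stay $(1,0)$ — in particular that $\beta$, the $B$-side coordinate, is always $0$ regardless of instance because every $b'\in B$ ranks all neighbours in one tie, so the only coordinate that can move under perturbation is $\alpha$. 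Once that dictionary is stated cleanly, conditions (i)–(iii) follow by the same short case split used in the proof of \Cref{thm:two-side-popular-chrc}, just applied in reverse.
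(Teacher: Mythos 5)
Your route differs from the paper's: the paper proves this lemma by a direct popularity-margin comparison (for an arbitrary matching $M'$ it case-splits on $\operatorname{vote}_a^{\mathcal{H}_e}(M,M')\in\{1,0,-1\}$ and shows $\Delta^{\mathcal{H}_e}(M,M')$ is bounded below by $\Delta^{I}(M,M')$ or $\Delta^{I'}(M,M')$, both nonnegative), whereas you go through the structural characterization of \Cref{thm:two-side-popular-chrc} and transport forbidden configurations. That plan is viable, but your central ``dictionary'' is stated backwards, and as written it is false. Since $M(a)=b$, the edge $f=\{a,z\}$ is a $(1,\cdot)$ edge in an instance exactly when $a$ prefers $z$ to $b$ there; by construction of $\mathcal{H}_e$ this happens precisely when $z\in P^I\cup P^{I'}$, i.e.\ when $f$ is a $(1,\cdot)$ edge in \emph{at least one} of $I,I'$. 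You assert instead that $f$ is $(1,\cdot)$ in $\mathcal{H}_e$ iff $z\in N_a\setminus(P^I\cup P^{I'}\cup\{b\})$ iff $f$ is $(1,\cdot)$ in \emph{both} $I$ and $I'$; and your ``conversely'' sentence (``$(-1,\cdot)$ in at least one of $I,I'$ implies $z\in P^I\cup P^{I'}$, hence $(-1,\cdot)$ in $\mathcal{H}_e$'') is likewise inverted — membership in $P^I\cup P^{I'}$ makes $f$ a $(1,\cdot)$ edge in $\mathcal{H}_e$, and being $(-1,\cdot)$ in one instance only says $z\notin P^I$ or $z\notin P^{I'}$. A counterexample to your biconditional: $z\in P^{I'}\setminus P^{I}$ gives an edge that is $(1,0)$ in $\mathcal{H}_e$ and in $I'$ but $(-1,0)$ in $I$, so your step ``the offending $a$-edge can be re-read in $I$'' fails.

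The gap is repairable with the corrected dictionary: in any witness cycle/path at most one labelled edge is incident to $a$, every other $(1,0)$ edge keeps its label in both $I$ and $I'$, and the $a$-edge (if it is $(1,0)$ in $\mathcal{H}_e$) is $(1,0)$ in at least one of $I,I'$; transporting the violation to \emph{that} instance (rather than always to $I$) still contradicts popularity there, using the ``popular $\Rightarrow$ conditions (i)--(iii)'' direction of \Cref{thm:two-side-popular-chrc} — note the paper asserts but does not actually prove that converse direction, so your proof leans on a claim the paper leaves unargued, unlike the paper's own proof of this lemma, which needs only the definition of popularity. A minor further slip: $\beta_f$ is not always $0$; by the paper's convention it is $1$ when the $B$-endpoint is unmatched in $M$, though this does not affect the perturbation analysis since $B$-side preferences are identical across $I$, $I'$, and $\mathcal{H}_e$.
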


\begin{proof}
Assume $M$ is popular in both the input instances, $I$ and $I^{\prime}$. Let $z \in (A \cup B) \setminus \{a\}$. Then we can argue like previously that $\operatorname{vote}^{\mathcal{H}_e}_z(M,M')=\operatorname{vote}^I_z(M,M')=\operatorname{vote}^{I^{\prime}}_z(M,M')$.

For any matching $M'$, consider the vote of $a$ in $\mathcal{H}_e$. Consider when $\text{vote}_a^{\mathcal{H}_e}(M, M') = 1$, then $\text{vote}_a^{\mathcal{H}_e}(M, M') \ge \text{vote}_a^{I}(M, M') \ge 0$. If $\text{vote}_a^{\mathcal{H}_e}(M, M') = 0$, then $\text{vote}_a^{\mathcal{H}_e}(M, M') = \text{vote}_a^{I}(M, M') \ge 0$ because $M'(a) = M(a)$. When $\text{vote}_a^{\mathcal{H}_e}(M, M') = -1$, then $M'(a) \in P^I \cup P^{I^{\prime}}$, and without loss of generality let's assume that $M'(x) \in P^I$. Then $\text{vote}_a^{I}(M, M') = -1$, and again $\Delta^{\mathcal{H}_e}(M, M') = \Delta^{I}(M, M') \geq 0$. So, we come to this conclusion that $M$ is popular in $\mathcal{H}_e$.
\end{proof}

These lemmas lead to the following corollary:

\begin{corollary}
\label{cor:hybrid-edge-equivalence}
The instance $(I, I^{\prime})$ of \TwoSd admits a robust popular matching containing edge $e$ if and only if the hybrid instance $\mathcal{H}_e$ admits a popular matching containing $e$.
\end{corollary}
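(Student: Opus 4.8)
The plan is to obtain this statement as an immediate consequence of the two preceding lemmas, since together they already supply both implications. First I would fix the edge $e = \{a, b\}$ incident to the perturbing agent $a$, and recall that $I$, $I^{\prime}$, and $\mathcal{H}_e$ are all defined on the common underlying graph $G = (A \cup B, E)$. Because of this, any matching that contains $e$ is feasible in all three instances, so the phrase ``robust popular matching containing $e$'' is unambiguous and transfers between instances without further checks. I would also note that the hypothesis $a \in e$ demanded by both \Cref{lem:hybrid-popular-implies-robust} and \Cref{lem:robust-implies-hybrid-popular} holds automatically here, since $e$ is chosen to be incident to the perturbing agent $a$.

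For the forward direction, suppose $(I, I^{\prime})$ admits a robust popular matching $M$ with $e \in M$. By the definition of robustness, $M$ is popular in $I$ and in $I^{\prime}$ individually. Applying \Cref{lem:robust-implies-hybrid-popular} to $M$ then yields that $M$ is popular in $\mathcal{H}_e$, and since $e \in M$ this is precisely a popular matching of $\mathcal{H}_e$ containing $e$.

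For the converse, suppose $\mathcal{H}_e$ admits a popular matching $M$ with $e \in M$. Applying \Cref{lem:hybrid-popular-implies-robust} gives that $M$ is popular in both $I$ and $I^{\prime}$; since $M \subseteq E$ it is feasible in both instances, hence it is a robust popular matching for $(I, I^{\prime})$, and it contains $e$ by hypothesis. Combining the two directions establishes the equivalence.

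I do not anticipate a genuine obstacle: the corollary is essentially a repackaging of \Cref{lem:hybrid-popular-implies-robust} and \Cref{lem:robust-implies-hybrid-popular}. The only point that merits an explicit sentence is the feasibility remark above --- that ``containing $e$'' carries over cleanly because the underlying graph is shared, and that the $a \in e$ precondition of both lemmas is met by construction.
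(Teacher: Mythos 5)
Your proposal is correct and matches the paper's intent exactly: the paper states this corollary as an immediate consequence of \Cref{lem:hybrid-popular-implies-robust} and \Cref{lem:robust-implies-hybrid-popular}, giving no further argument, and your two-direction application of those lemmas (plus the harmless feasibility and $a \in e$ remarks) is precisely that reasoning made explicit.
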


\subsection{Handling Unmatched Perturbing Agent}

We now consider the case where the agent $a$ is unmatched in a robust popular matching.

\begin{lemma}
\label{lemma:unmatched-robust}
Let $M$ be a matching in which $a$ is unmatched. Then $M$ is popular in $I$ if and only if it is popular in $I^{\prime}$.
\end{lemma}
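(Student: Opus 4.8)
The plan is to show that perturbing $a$'s preference list is invisible to the popularity test whenever $a$ is left unmatched by $M$, because every vote that could conceivably depend on $a$'s preferences is already pinned down by the convention that being unmatched is the worst possible outcome. I would argue directly from the definition of the popularity margin rather than invoking a structural characterization, since the argument is then essentially a term-by-term comparison.

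First I would note that since $a$ is unmatched in $M$, the edge set $M$ uses no edge incident to $a$, so $M$ is feasible in both $I$ and $I^{\prime}$, and it is literally the same object in both; moreover $G$ is common to the two instances, so the family of competing matchings $M^{\prime}$ is also the same. Fix any such $M^{\prime}$ and compare $\Delta^{I}(M,M^{\prime}) = \sum_{z \in A \cup B} \operatorname{vote}^{I}_{z}(M,M^{\prime})$ with $\Delta^{I^{\prime}}(M,M^{\prime}) = \sum_{z \in A \cup B} \operatorname{vote}^{I^{\prime}}_{z}(M,M^{\prime})$ term by term. For every $z \in (A \cup B)\setminus\{a\}$, the value $\operatorname{vote}_{z}(M,M^{\prime})$ is determined by $z$'s preference order (trivial when $z \in B$) together with the partners $M(z)$ and $M^{\prime}(z)$; $z$'s preferences are unchanged from $I$ to $I^{\prime}$, and $M$, $M^{\prime}$ are the same matchings in both instances, so $\operatorname{vote}^{I}_{z}(M,M^{\prime}) = \operatorname{vote}^{I^{\prime}}_{z}(M,M^{\prime})$.

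The only remaining term is $z = a$. Here $M(a)$ is the state of being unmatched, which is $a$'s least-preferred outcome in either instance. Hence if $a$ is matched in $M^{\prime}$ then $a$ strictly prefers $M^{\prime}$ and $\operatorname{vote}^{I}_{a}(M,M^{\prime}) = \operatorname{vote}^{I^{\prime}}_{a}(M,M^{\prime}) = -1$, while if $a$ is unmatched in $M^{\prime}$ as well then $\operatorname{vote}^{I}_{a}(M,M^{\prime}) = \operatorname{vote}^{I^{\prime}}_{a}(M,M^{\prime}) = 0$; in neither case does the value depend on how $a$ ranks its neighbours. Summing, $\Delta^{I}(M,M^{\prime}) = \Delta^{I^{\prime}}(M,M^{\prime})$ for every matching $M^{\prime}$, so $\Delta^{I}(M,M^{\prime}) \ge 0$ holds for all $M^{\prime}$ if and only if $\Delta^{I^{\prime}}(M,M^{\prime}) \ge 0$ holds for all $M^{\prime}$, i.e. $M$ is popular in $I$ exactly when it is popular in $I^{\prime}$.

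I do not anticipate a genuine obstacle; the only point needing a little care is making explicit that $a$ unmatched forces $M$ to be feasible in both instances (so the two margins are being compared between the same $M$ over the same pool of opponents) and then invoking the unmatched-is-worst convention so that $a$'s single potentially-differing vote is in fact identical. As an alternative route one could instead use the labelled-graph characterization of \Cref{thm:two-side-popular-chrc}: every edge of $E \setminus M$ incident to $a$ carries label $(1,0)$ in both $I$ and $I^{\prime}$ (its $\alpha$-coordinate is $1$ because $a$ is unmatched, its $\beta$-coordinate is $0$ because the other endpoint lies in $B$), the labels of all other edges are unchanged, and alternating paths and cycles are determined by $M$ alone, so conditions $(i)$–$(iii)$ hold in $I$ iff they hold in $I^{\prime}$.
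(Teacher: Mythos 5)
Your argument is correct and is essentially the paper's own proof, just spelled out in more detail: the paper likewise observes that $a$'s vote is forced (since being unmatched is worst in either instance) and that all other agents' votes are unchanged, so $\Delta^{I}(M,M^{\prime}) = \Delta^{I^{\prime}}(M,M^{\prime})$ for every $M^{\prime}$. The extra remarks on feasibility and the alternative route via \Cref{thm:two-side-popular-chrc} are fine but not needed.
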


\begin{proof}
Since $a$ is unmatched, $\text{vote}_a^{I}(M, M') = \text{vote}_a^{I^{\prime}}(M, M')$ for all $M'$. As all other agents have identical preferences, $\Delta^{I}(M, M') = \Delta^{I^{\prime}}(M, M')$.
\end{proof}

\subsection{Algorithm and Analysis}

Combining the above results, we obtain \Cref{algo:two-side-one-agent}.

\begin{algorithm}[t]
\caption{\RPMTwoSd for single agent perturbations}
\label{algo:two-side-one-agent}
\begin{algorithmic}[1]
\Require Instances $I$, $I^{\prime}$ differing only in preferences of agent $a$
\Ensure A robust popular matching or ``No'' if none exists
\State Compute a popular matching $M$ for $I$ \Comment{Use algorithm from \cite{doi:10.1137/16M1076162}}
\If{$a$ is unmatched in $M$}
    \State \Return $M$
\EndIf
\For{each edge $e$ incident to $a$}
    \If{$\mathcal{H}_e$ has a popular matching $M$ with $e \in M$} \Comment{Use algorithm from \cite{doi:10.1137/16M1076162}}
        \State \Return $M$
    \EndIf
\EndFor
\State \Return ``No robust popular matching exists''
\end{algorithmic}
\end{algorithm}

\begin{theorem}
\label{thm:one-agent-poly}
\RPMTwoSd can be solved in polynomial time when the perturbed instance differs only in the preference order of one agent.
\end{theorem}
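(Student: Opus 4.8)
The plan is to verify that \Cref{algo:two-side-one-agent} is correct and runs in polynomial time; once this is done the theorem follows, since every ingredient it needs is already in hand --- \Cref{lemma:unmatched-robust} for the branch where the perturbing agent $a$ stays unmatched, \Cref{lem:hybrid-popular-implies-robust} together with \Cref{cor:hybrid-edge-equivalence} for the branch where $a$ is matched, and the $O(n^{2})$ popular-matching algorithm of Cseh et al.~\cite{doi:10.1137/16M1076162} as the underlying computational primitive.

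First I would dispatch the running time. A single popular-matching computation costs $O(n^{2})$ by~\cite{doi:10.1137/16M1076162}; the loop runs once per edge incident to $a$, hence at most $|N_{a}| \le n$ times; and each hybrid instance $\mathcal{H}_{e}$ is built from $I$ by rewriting only $a$'s preference list, which is linear. So the procedure is polynomial as soon as its loop body --- deciding whether $\mathcal{H}_{e}$ has a popular matching that \emph{contains} $e$ --- is polynomial, which I discuss last.

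Next I would argue correctness in two halves. Soundness: if the algorithm returns the first-phase matching $M$ because $a$ is unmatched in it, then $M$ is popular in $I$ by~\cite{doi:10.1137/16M1076162}, hence popular in $I^{\prime}$ by \Cref{lemma:unmatched-robust}, hence robust popular; if it returns a matching $N$ inside the loop for some edge $e$, then $e \in N$ and $N$ is popular in $\mathcal{H}_{e}$, so \Cref{lem:hybrid-popular-implies-robust} makes $N$ popular in both $I$ and $I^{\prime}$. Completeness: assume some robust popular matching $N^{\ast}$ exists. If $a$ is matched in $N^{\ast}$, say $a \in e^{\ast} \in N^{\ast}$, then $N^{\ast}$ is robust popular and contains $e^{\ast}$, so by \Cref{cor:hybrid-edge-equivalence} the hybrid instance $\mathcal{H}_{e^{\ast}}$ admits a popular matching containing $e^{\ast}$, and the loop iteration $e = e^{\ast}$ returns one. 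If $a$ is unmatched in $N^{\ast}$, then $I$ has a popular matching leaving $a$ unmatched, so the first phase should be read as ``find a popular matching of $I$ in which $a$ is unmatched, if one exists'' --- decidable in the same framework, for instance by running~\cite{doi:10.1137/16M1076162} on $G - a$ and then checking via \Cref{thm:two-side-popular-chrc} that the matching stays popular after reinstating $a$ as an isolated unmatched vertex --- and the algorithm returns such a matching. Combining the two halves yields the theorem.

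The step I expect to be the main obstacle is the loop body: deciding whether $\mathcal{H}_{e}$ has a popular matching that contains the prescribed edge $e = \{a,b\}$. Ordinary popularity is handled by~\cite{doi:10.1137/16M1076162}, but forcing an edge is strictly stronger and needs a separate argument. The route I would take is through \Cref{thm:two-side-popular-chrc}: pin $a$ to $b$ (delete $a$'s other incident edges, then contract out $a$ and $b$ while bookkeeping the residual votes of their former neighbours), and search the residual instance for a maximum matching avoiding the forbidden $(1,0)$-edge patterns on alternating cycles and paths, mirroring single-forced-edge popular-matching techniques such as those of~\cite{10.5555/3288898.3288941}; these carry over to one-sided ties because each $b \in B$ contributes only a $0$ to every edge label, so the abstention rule baked into \Cref{thm:two-side-popular-chrc} is the only genuinely new feature to account for. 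A secondary subtlety, already flagged, is making the ``$a$ unmatched'' phase complete instead of letting it depend on which popular matching of $I$ the base algorithm happens to output.
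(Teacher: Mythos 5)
Your proposal follows the same route as the paper's own proof: soundness of \Cref{algo:two-side-one-agent} via \Cref{lemma:unmatched-robust} and \Cref{lem:hybrid-popular-implies-robust}, completeness for the case where $a$ is matched via \Cref{cor:hybrid-edge-equivalence}, and the $O(n^{2})$ algorithm of \cite{doi:10.1137/16M1076162} as the computational primitive. The difference is that you make explicit two points that the paper's proof passes over, and both are genuine issues with the paper's argument rather than artefacts of your reading. First, the loop body must decide whether $\mathcal{H}_e$ admits a popular matching \emph{containing} the prescribed edge $e$; the paper merely annotates this line with a pointer to \cite{doi:10.1137/16M1076162}, which decides plain popularity, whereas forcing an edge is a strictly stronger requirement (in the strict-preference setting it needs the separate machinery of \cite{10.5555/3288898.3288941}). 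Your sketch of adapting \Cref{thm:two-side-popular-chrc} to the forced-edge setting is a plausible way to fill this, but it is only a sketch, and without it neither your argument nor the paper's is complete. Second, completeness when robust popular matchings exist only with $a$ unmatched: the algorithm returns at line 3 only if the \emph{particular} popular matching computed for $I$ happens to leave $a$ unmatched, and the paper's converse direction is argued only for the case that the robust matching matches $a$. Since nothing in the paper shows that all popular matchings of $I$ match the same set of vertices, it is conceivable that the computed $M$ matches $a$, no hybrid instance has a popular matching through an edge at $a$, and yet a robust popular matching leaving $a$ unmatched exists, in which case the algorithm as written would wrongly answer ``No''. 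Your re-reading of the first phase as ``find a popular matching of $I$ leaving $a$ unmatched, if one exists'' is the right kind of repair, but your proposed implementation (run \cite{doi:10.1137/16M1076162} on $G-a$ and re-check with $a$ reinstated) is not argued to be complete either. In short: same approach as the paper; the two obstacles you flag are real gaps shared by the paper's proof, and a full proof of the theorem requires closing them, not only naming them.
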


\begin{proof}
The algorithm runs in polynomial time since there exists an algorithm that decides the popular matching problem for an instance of \TwoSd \cite{doi:10.1137/16M1076162} in $O(n^2)$ time.

For correctness: If the algorithm returns a matching at line 3, then it is popular in $I$ and leaves $a$ unmatched, so by \Cref{lemma:unmatched-robust} it is robust. If it returns at line 7, the matching is popular in $\mathcal{H}_e$ and contains $e$, so by \Cref{lem:hybrid-popular-implies-robust} it is robust.

Conversely, if a robust popular matching $M$ exists, and $M$ matches $a$ via some edge $e$, so by \Cref{cor:hybrid-edge-equivalence}, the hybrid instance $\mathcal{H}_e$ has a popular matching containing $e$, and the algorithm returns the matching at line 7 of \Cref{algo:two-side-one-agent}.
\end{proof}

\subsection{Extension to Multiple Instances}

The approach generalizes to multiple instances that differ only in the preferences of a single agent $a$. For an edge $e = \{a, b\}$, we define a generalized hybrid instance where $a$'s preference order places all agents preferred to $b$ in \emph{any} input instance above $b$. This ensures that if a matching is popular in the hybrid instance, it remains popular in all original instances.

\begin{theorem}
\label{thm:multiple-instances}
There exists a polynomial-time algorithm for determining whether a matching exists that is popular in all instances $\mathcal{I}_1, \dots, \mathcal{I}_k$, where all instances share the same underlying graph and differ only in the preferences of a single agent.
\end{theorem}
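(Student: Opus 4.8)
The plan is to mirror the single-perturbation argument from \Cref{sec:two-sd-algo}, but with a generalized hybrid instance that simultaneously respects all $k$ input instances. First I would fix the perturbing agent $a$ and observe that, since all instances share the same underlying graph $G = (A \cup B, E)$ and differ only in $a$'s preference order, any candidate matching is feasible in all $k$ instances. As in \Cref{lemma:unmatched-robust}, the case where $a$ is unmatched is easy: if $a$ is unmatched in $M$, then $\operatorname{vote}_a$ is identical across all instances, so $M$ is popular in one instance iff it is popular in all of them; hence one call to the algorithm of \cite{doi:10.1137/16M1076162} on $\mathcal{I}_1$ settles this case.

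For the case where $a$ is matched, I would iterate over each edge $e = \{a, b\}$ incident to $a$ and build a generalized hybrid instance $\mathcal{H}_e$ in which $a$'s preference order places above $b$ every vertex $z$ that $a$ ranks above $b$ in \emph{some} instance $\mathcal{I}_j$ — that is, above $b$ we put $\bigcup_{j=1}^{k} P^{\mathcal{I}_j}$ (where $P^{\mathcal{I}_j} = \{z : z \succ_a^{\mathcal{I}_j} b\}$), then $b$, then everything else, with arbitrary internal orderings. The preferences of all vertices other than $a$ are left unchanged. The two key lemmas then generalize verbatim: if $M$ is popular in $\mathcal{H}_e$ and $e \in M$, then for each $j$ the vote of $a$ satisfies $\operatorname{vote}_a^{\mathcal{I}_j}(M, M') \ge \operatorname{vote}_a^{\mathcal{H}_e}(M, M')$ (because $b$ is ranked no higher in $\mathcal{I}_j$ than in $\mathcal{H}_e$, since every vertex $a$ prefers to $b$ in $\mathcal{I}_j$ was lifted above $b$ in $\mathcal{H}_e$), so $\Delta^{\mathcal{I}_j}(M, M') \ge \Delta^{\mathcal{H}_e}(M, M') \ge 0$; this is exactly the argument of \Cref{lem:hybrid-popular-implies-robust}. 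Conversely, if $M$ is popular in all $\mathcal{I}_j$ and $e \in M$, then whenever $\operatorname{vote}_a^{\mathcal{H}_e}(M, M') = -1$ we have $M'(a) \in \bigcup_j P^{\mathcal{I}_j}$, so $M'(a) \in P^{\mathcal{I}_{j_0}}$ for some $j_0$, giving $\operatorname{vote}_a^{\mathcal{I}_{j_0}}(M, M') = -1$ and hence $\Delta^{\mathcal{H}_e}(M, M') = \Delta^{\mathcal{I}_{j_0}}(M, M') \ge 0$ (the other vertices contribute identically); this matches \Cref{lem:robust-implies-hybrid-popular}. Together these yield the analogue of \Cref{cor:hybrid-edge-equivalence}: a matching popular in all $\mathcal{I}_j$ and containing $e$ exists iff $\mathcal{H}_e$ admits a popular matching containing $e$.

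The algorithm is then: if $\mathcal{I}_1$ has a popular matching leaving $a$ unmatched, return it; otherwise, for each of the (at most $|N_a| \le n$) edges $e$ incident to $a$, run the $O(n^2)$-time popular-matching-with-a-forced-edge procedure of \cite{doi:10.1137/16M1076162} on $\mathcal{H}_e$, and return the first matching found; if none succeeds, report that no robust popular matching exists. Correctness in the forward direction follows from the generalized \Cref{lem:hybrid-popular-implies-robust} and \Cref{lemma:unmatched-robust}, and completeness follows from the generalized \Cref{cor:hybrid-edge-equivalence} together with the observation that a robust popular matching either leaves $a$ unmatched or matches $a$ along some incident edge. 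The running time is $O(n \cdot n^2) = O(n^3)$, which is polynomial; note the bound is independent of $k$ because the hybrid instance collapses all $k$ preference orders of $a$ into a single one, and constructing $\mathcal{H}_e$ only requires taking the union of the $k$ prefix sets $P^{\mathcal{I}_j}$, costing $O(kn)$ per edge.

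The main obstacle — though it is a mild one — is verifying that collapsing the $k$ orders into one linear order is lossless in \emph{both} directions simultaneously: the "lifting" of $\bigcup_j P^{\mathcal{I}_j}$ above $b$ is exactly what makes the forward inequality $\operatorname{vote}_a^{\mathcal{I}_j} \ge \operatorname{vote}_a^{\mathcal{H}_e}$ hold for \emph{every} $j$ at once, while the fact that we lift \emph{only} those vertices (and nothing strictly between $b$ and the non-preferred block) is what guarantees that a $-1$ vote in $\mathcal{H}_e$ is witnessed by a $-1$ vote in \emph{some} $\mathcal{I}_j$. One must check that the arbitrary internal ordering of the lifted block and of the tail block never matters, which holds because $a$'s vote between $M$ and $M'$ depends only on which side of $M(a) = b$ the partner $M'(a)$ falls, and the partners of all other vertices are governed by preference lists identical across all instances. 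Once this is spelled out, the remaining pieces are routine adaptations of the single-perturbation proofs.
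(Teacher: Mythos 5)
Your proposal is correct and takes essentially the same route as the paper: the paper justifies \Cref{thm:multiple-instances} only by the sketch preceding it (the generalized hybrid instance that lifts $\bigcup_{j} P^{\mathcal{I}_j}$ above $b$), and your two directions are precisely the multi-instance analogues of \Cref{lem:hybrid-popular-implies-robust}, \Cref{lem:robust-implies-hybrid-popular} and \Cref{cor:hybrid-edge-equivalence}, combined with the same algorithmic skeleton as \Cref{algo:two-side-one-agent} and \Cref{lemma:unmatched-robust}. Your handling of the unmatched case and your reliance on the popular-matching (forced-edge) subroutine of \cite{doi:10.1137/16M1076162} match the paper's own treatment, so nothing essential differs from or is missing relative to the paper's argument.
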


However, this approach does not extend to perturbations by multiple agents, as the correspondence between hybrid instances and robust popular matchings breaks down even when two agents swap preferences over two adjacent agents.

\section{Stable Matchings with Two-Sided Ties and Approximate Popularity}
\label{sec:stable-mat-2-sided-tie}
We are given as input a bipartite graph $G = (A \cup B, E)$ where vertices in $A$ stand for agents and the vertices included in $B$ are called the jobs. Both agents and jobs allow weak preferences, i.e., ties are allowed in their preference rankings. Some agents and jobs impose strict preference rankings over their neighbours in $G$. The model where every applicant $a \in A$ has a strict ordering of jobs and weak preferences, i.e., ties are allowed only in $b \in B$, is extensively studied \cite{10.1007/s00453-024-01215-6,doi:10.1137/16M1076162}.

We consider the stable matching problem in bipartite graphs $G = (A \cup B, E)$ where \textit{both} sides have weak preferences (ties allowed), with the restriction that every tie has length at most $k$. To the best of our knowledge, no positive results on \emph{near-popular} matchings in the setting of two-sided ties are currently known. We present a polynomial-time algorithm that computes a stable matching with \emph{unpopularity factor} at most $k$, extending Kavitha's result \cite{10.1007/s00453-024-01215-6} for one-sided ties to the two-sided ties case.

\subsection{Our Algorithm}

\begin{algorithm}
\caption{Stable Matching with Two-Sided Ties}
\label{algo:stbl-mat-2-tie}
\begin{algorithmic}[1]
\Require Bipartite graph $G = (A \cup B, E)$ with weak preferences on both sides, all ties of length $\leq k$
\Ensure Stable matching $M$ with $u(M) \leq k$
\State
\Function{StableMatchingTwoSided}{$G, k$}
    \State \textbf{Step 1: Extended Graph Construction}
    \State $A_0 \gets A \cup \{d'(b) : b \in B\}$ \Comment{Dummy agents}
    \State $B_0 \gets B \cup \{d(a) : a \in A\}$ \Comment{Dummy jobs}
    \State $E_0 \gets E \cup \{(a, d(a)) : a \in A\} \cup \{(d'(b), b) : b \in B\}$
    \For{$a \in A_0$}
        \If{$a$ has tied preferences}
            \State Break ties arbitrarily to create strict preferences
        \EndIf
    \EndFor
    \State Dummy vertices are least preferred by their real counterparts
    \State
    \State \textbf{Step 2: Proposal Phase}
    \State $G' \gets (A_0 \cup B_0, \emptyset)$
    \State $M_0 \gets \emptyset$
    \While{$\exists$ even vertex $a \in A_0$ with $\deg_{G'}(a) < k$}
        \State \Call{Propose}{$a$} \Comment{Using Kavitha's subroutine \cite{10.1007/s00453-024-01215-6}}
    \EndWhile
    \State $M_0 \gets$ maximum matching in $G'$
    \State
    \State \textbf{Step 3: Final Matching}
    \State $M \gets M_0 \setminus \{\text{edges incident to dummy vertices}\}$
    \State \Return $M$
\EndFunction
\end{algorithmic}
\end{algorithm}

The \textsc{Propose} subroutine is identical to Kavitha's algorithm \cite{10.1007/s00453-024-01215-6}:
\begin{itemize}
\item Agent $a$ proposes to neighbors in decreasing order of preference
\item Jobs break ties based on $\text{rank}_{G'}(a,b)$ values
\item The rank represents the position of $b$ in $a$'s current preference list within $G'$
\end{itemize}

\subsection{Proof of Correctness}
We divide the proof of correctness of \Cref{algo:stbl-mat-2-tie} into two parts, e.g., stability of the returned matching $M$ and bound of the \emph{unpopularity factor}. Then we discuss the running time.
\subsection{Stability}

\begin{theorem}
\label{thm:ret-algo-stbl-mat-2-tie}
The matching $M$ returned by \Cref{algo:stbl-mat-2-tie} is stable in $G$.
\end{theorem}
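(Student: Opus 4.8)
The plan is to show that any edge $(a,b) \in E \setminus M$ fails to block $M$ by tracing what happened to $a$ and $b$ during the proposal phase on the extended graph $G'$. Recall that the algorithm first pads $G$ with dummy agents $d'(b)$ and dummy jobs $d(a)$, breaks every tie arbitrarily to obtain strict preferences on $A_0 \cup B_0$, runs Kavitha's proposal subroutine until no even agent has degree less than $k$ in $G'$, takes a maximum matching $M_0$ of $G'$, and finally deletes all edges touching dummy vertices to obtain $M$. The dummy gadgets guarantee that every real agent $a$ and every real job $b$ is matched in $M_0$ (possibly to its dummy), which is the standard device ensuring we reason about a perfect-type matching on $A_0 \cup B_0$.

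First I would fix a non-matching edge $(a,b)$ and recall that in the strict instance on $A_0 \cup B_0$, Kavitha's proposal phase produces a subgraph $G'$ together with a maximum matching $M_0$ that is stable in $G'$ with respect to the broken-tie preferences — this is the core invariant of her algorithm, which I may assume. The key observation is that the proposal process has $a$ propose down its (strict) list; since the phase only terminates when $a$ is no longer even or already has degree $k$, every job that $a$ strictly prefers to $M_0(a)$ either received a proposal from $a$ and rejected it (in favour of someone it weakly prefers to $a$ in $G$), or was never reached because $a$ stopped proposing — but in the latter case $a$ is matched in $M_0$ to something at least as good, so such a job cannot exist strictly above $M_0(a)$. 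Hence, in $G$, if $a$ strictly prefers $b$ to $M(a)$ (equivalently to $M_0(a)$, using that dummies are least preferred), then $b$ must have rejected a proposal from $a$, which by the subroutine's rejection rule means $M_0(b) = M(b)$ is a partner that $b$ weakly prefers to $a$ in $G$. Therefore $b$ does not strictly prefer $a$ to $M(b)$, so $(a,b)$ does not block.

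Next I would handle the two boundary cases. If $a$ is unmatched in $M$, then $M_0(a) = d(a)$, its dummy, so $a$ proposed all the way down its real list and was rejected everywhere; the argument above applies verbatim to show no real $b$ strictly prefers $a$ to its own $M$-partner. Symmetrically, if $b$ is unmatched in $M$, then $M_0(b) = d'(b)$; but $d'(b)$ has $b$ as its unique neighbour and ranks it top, so in the stable matching $M_0$ of $G'$ no real agent can prefer $b$ to its $M_0$-partner while $b$ prefers that agent to $d'(b)$ — which $b$ always does, since the dummy is least preferred — contradiction unless no such blocking pair exists; this rules out $(a,b)$ blocking when $b$ is unmatched. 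Combining the generic case with these two, no edge of $G$ blocks $M$, so $M$ is stable.

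The main obstacle I expect is the faithful translation between stability of $M_0$ in $G'$ under the arbitrarily-broken-tie order and stability of $M$ in $G$ under the original weak preferences: a pair that is blocking in $G$ (because of indifference) need not be blocking under the broken-tie refinement, so one must argue in the correct direction — namely that $G$-stability is \emph{weaker} per edge than the refined stability we obtain, so the refined guarantee suffices. The delicate point is the rejection rule using $\mathrm{rank}_{G'}(a,b)$: I must confirm that whenever $b$ rejects $a$, the job $b$ ends up with a partner it weakly prefers to $a$ in the \emph{original} $G$ preferences (not merely in the broken-tie order), which holds because the tie-breaking only refines, never contradicts, $b$'s weak order. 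Once that compatibility is nailed down, the rest is the routine case analysis sketched above.
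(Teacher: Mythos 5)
Your top-level skeleton matches the paper's: lift a would-be blocking pair of $M$ in $G$ to the extended instance via the dummy vertices (which are least preferred, so the unmatched cases reduce to matched-to-dummy cases), note that strict preference in the original weak order survives the tie-breaking on the $A_0$-side, and then appeal to the guarantees of Kavitha's subroutine. The paper, however, uses that subroutine as a black box: it cites Kavitha's correctness as directly giving that $M_0$ is stable in all of $G_0$, so a blocking pair lifted into $G_0$ is an immediate contradiction. You instead assume only the weaker invariant that $M_0$ is stable \emph{within the proposal subgraph} $G'$ and try to bridge from $G'$ to $G_0$ by tracing the proposal dynamics, and it is in that bridge that your argument has a genuine gap.

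Concretely, your dichotomy ``every job $a$ strictly prefers to $M_0(a)$ was either proposed to and rejected, or never reached'' is not valid for this algorithm: an agent may hold up to $k$ edges of $G'$ simultaneously (the loop runs while an even agent has $\deg_{G'}(a)<k$), and $M_0$ is merely a maximum matching of $G'$, so $a$ can end up matched strictly below a neighbour $b$ whose edge is still present in $G'$ — proposed, never rejected, yet not $M_0(a)$. That third case is exactly where the in-$G'$ stability invariant you state at the outset would be needed, but you never wire it into the case analysis. Second, your key claim that a rejection by $b$ guarantees $b$'s \emph{final} partner is weakly preferred to $a$ in $b$'s original weak order is itself one of the nontrivial invariants of Kavitha's algorithm: $b$'s ties are never broken (only $A_0$'s are; $b$ resolves ties by $\mathrm{rank}_{G'}$), and $b$'s final partner is not simply its best held proposal but whatever the maximum-matching extraction assigns, so ``tie-breaking only refines $b$'s order'' does not establish the claim. (Your assertion that every real vertex is matched in $M_0$, at worst to its dummy, is likewise asserted rather than proved and is needed for the unmatched cases.) The clean repair is the paper's route: keep your lifting argument, but invoke Kavitha's theorem as giving stability of $M_0$ in $G_0$ outright rather than re-deriving it by a deferred-acceptance-style trace that does not match this algorithm's mechanics.
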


\begin{proof}
Assume for contradiction that $(a,b)$ blocks $M$ in $G$. Then:
\begin{itemize}
\item $a$ prefers $b$ to $M(a)$ (or $a$ is unmatched and prefers $b$)
\item $b$ prefers $a$ to $M(b)$ (or $b$ is unmatched and prefers $a$)
\end{itemize}

In the extended graph $G_0$:
\begin{itemize}
\item If $a$ is unmatched in $M$, then $M_0(a) = d(a)$, and $a$ prefers $b$ to $d(a)$
\item If $b$ is unmatched in $M$, then $M_0(b) = d'(b)$, and $b$ prefers $a$ to $d'(b)$
\end{itemize}

Thus $(a,b)$ blocks $M_0$ in $G_0$. However, by the correctness of Kavitha's algorithm \cite{10.1007/s00453-024-01215-6} applied to $G_0$ (which has strict preferences on $A_0$), $M_0$ is stable in $G_0$. Contradiction.
\end{proof}

\subsection{Unpopularity Factor Bound}

\begin{definition}
For any edge $(a,b) \in E \setminus M$, define the label:
\begin{itemize}
\item $\text{vote}_a(b, M) = +1$ if $a$ prefers $b$ to $M(a)$, $-1$ if $a$ prefers $M(a)$ to $b$, $0$ if indifferent
\item $\text{vote}_b(a, M) = +1$ if $b$ prefers $a$ to $M(b)$, $-1$ if $b$ prefers $M(b)$ to $a$, $0$ if indifferent
\end{itemize}
Label each edge $(a,b) \in E \setminus M$ as $(\text{vote}_a(b, M), \text{vote}_b(a, M))$.
\end{definition}

\begin{lemma}
No edge in $E \setminus M$ is labeled $(+1, +1)$.
\end{lemma}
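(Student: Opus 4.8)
The plan is to argue by contradiction, exploiting the fact that the matching $M_0$ produced by Kavitha's algorithm on the extended graph $G_0$ is stable there, together with the relationship between preferences in $G$ and in $G_0$. Suppose some edge $(a,b) \in E \setminus M$ carried the label $(+1,+1)$, i.e. $\operatorname{vote}_a(b,M) = +1$ and $\operatorname{vote}_b(a,M) = +1$. By definition of these labels, $a$ strictly prefers $b$ to $M(a)$ (or $a$ is unmatched in $M$, in which case it is matched to its dummy job $d(a)$ in $M_0$, which it likes least), and symmetrically $b$ strictly prefers $a$ to $M(b)$ (or $b$ is unmatched in $M$ and matched to $d'(b)$ in $M_0$). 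In either case, in the extended instance $G_0$ the pair $(a,b)$ is a blocking pair for $M_0$: both endpoints strictly prefer each other to their $M_0$-partners.

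The key step is to transfer the strict-preference comparison from $G$ to $G_0$. Since the dummy vertices are least preferred by their real counterparts, replacing an $M$-unmatched real vertex by its dummy partner in $M_0$ only makes the comparison ``$a$ prefers $b$'' more favourable, so the blocking condition is preserved. The one subtlety is the arbitrary tie-breaking: $a$ and $b$ have strict preferences in $G$ only along this edge if the relevant comparisons were already strict (label $+1$ requires a strict preference, not indifference), so the arbitrary breaking of ties in Step~1 does not reverse these particular comparisons — a label of $+1$ in $G$ forces the corresponding strict preference to persist in $G_0$ regardless of how ties are broken. Hence $(a,b)$ genuinely blocks $M_0$ in $G_0$.

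This contradicts the stability of $M_0$ in $G_0$, which follows from the correctness of Kavitha's algorithm \cite{10.1007/s00453-024-01215-6} applied to $G_0$ (noting $G_0$ has strict preferences on $A_0$ after Step~1), exactly as invoked in the proof of \Cref{thm:ret-algo-stbl-mat-2-tie}. Therefore no edge of $E \setminus M$ can be labelled $(+1,+1)$.

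I expect the main obstacle to be the careful bookkeeping around the dummy gadget and the tie-breaking: one must verify that a $(+1,+1)$ label in $G$, which concerns the \emph{original} (possibly weak) preferences, translates into a genuine strict blocking pair in $G_0$ under the \emph{strictified} preferences, and in particular that the tie-breaking in Step~1 never demotes $b$ below $M(a)$ in $a$'s strict list when $a$ originally strictly preferred $b$ (and symmetrically for $b$). Since $+1$ already encodes strict preference in the original instance, this is a matter of observing that arbitrary tie-breaking refines but never contradicts the original partial order, so the argument goes through cleanly.
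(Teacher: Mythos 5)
Your proposal is correct and rests on the same idea as the paper: a $(+1,+1)$ edge is precisely a blocking pair for $M$, contradicting the stability established in \Cref{thm:ret-algo-stbl-mat-2-tie}. The paper simply cites that theorem in one line, whereas you re-derive its proof (lifting to $G_0$, dummy partners, tie-breaking preserving strict comparisons), which is sound but redundant.
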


\begin{proof}
Such an edge would block $M$, contradicting \Cref{thm:ret-algo-stbl-mat-2-tie}.
\end{proof}

\begin{lemma}
\label{lemma:max-seg-len}
For any alternating path/cycle $\rho$ in $M \oplus N$, any maximal contiguous segment of edges labeled $(+1, 0)$ has length at most $k-1$.
\end{lemma}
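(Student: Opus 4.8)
\textbf{Proof proposal for \Cref{lemma:max-seg-len}.}

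The plan is to argue by contradiction: suppose some alternating path or cycle $\rho$ in $M \oplus N$ contains a maximal contiguous run of $(+1,0)$-labelled edges of length at least $k$, say $e_1, e_2, \dots, e_k$ appearing consecutively along $\rho$, with $e_i = \{a_i, b_i\}$ and $e_i \in N \setminus M$. Because this is an alternating segment, consecutive $N$-edges in the run are linked by $M$-edges, so the vertices along the segment form a chain $a_1 - b_1 = M(a_2)? \dots$; more carefully, since each $e_i$ is a non-matching edge labelled $(+1,0)$, the vertex $a_i$ strictly prefers $b_i$ to $M(a_i)$ and $b_i$ is indifferent between $a_i$ and $M(b_i)$. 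The key observation is that the $(+1,0)$ label forces a strict improvement on the $A$-side at every step, so walking along the $k$ consecutive $(+1,0)$-edges pins down $k$ distinct strict preference comparisons for the agents involved. I would then translate the $0$ on the $B$-side into the statement that each $b_i$ lies in a tie together with its $M$-partner; chaining these $M$-edges and $N$-edges through the segment shows that all the $b_i$'s (or all the relevant $A$-side partners) sit inside a single tie of one vertex's preference list — contradicting the hypothesis that every tie has length at most $k$, since a run of length $k$ would require a tie of length $k+1$ (or $k$, depending on exactly how the endpoints are counted).

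More concretely, I would carry out the steps as follows. First, fix the orientation of $\rho$ and index the run so that $e_1, \dots, e_k$ are the $(+1,0)$ edges in order, with $M$-edges $m_0, m_1, \dots$ interleaved. Second, use the $B$-side label: for each $e_i = \{a_i, b_i\}$, the vote $\text{vote}_{b_i}(a_i, M) = 0$ means $b_i$ is indifferent between $a_i$ and $M(b_i)$, i.e. $a_i$ and $M(b_i)$ lie in the same tie of $b_i$'s list — but I actually want the dual reading on the $A$-side, so I would instead track, for the appropriate orientation, how the $M$-edge adjacent to $e_i$ forces the agent $a_{i+1}$ sharing that $M$-edge to have $b_i$ and $b_{i+1}$ related. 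Third, combine the $k$ constraints to exhibit a set of $k+1$ vertices all mutually tied in some single vertex's preference list (the endpoints of the run give the "$+1$"), which violates the length-$\le k$ assumption. Fourth, handle the cycle case and the two path-endpoint cases uniformly, noting maximality of the run means the edges immediately before $e_1$ and after $e_k$ (if they exist) are \emph{not} $(+1,0)$, which is exactly what lets us close off the tie at both ends.

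The main obstacle I anticipate is bookkeeping the orientation and the precise bound: whether a maximal $(+1,0)$-run of length $\ell$ forces a tie of length $\ell$ or $\ell+1$ depends on how the $M$-edges at the two ends of the run interact with the neighbouring non-$(+1,0)$ edges, and this is exactly where the "$k-1$" in the statement (rather than "$k$") comes from. I would need to be careful that the constraint propagated along an $M$-edge is "these two jobs are tied in the agent's list" versus "these two agents are tied in the job's list", and chain the \emph{right} one so that a single tie of length $k-1$ suffices to accommodate $k-1$ consecutive edges but a $k$-th edge overflows it. I expect the rest — that $(+1,0)$ on a non-matching edge means strict $A$-side improvement and indifferent $B$-side, and that stability (\Cref{thm:ret-algo-stbl-mat-2-tie}) rules out $(+1,+1)$ — to be routine given the definitions already in the excerpt.
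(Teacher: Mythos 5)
Your approach has a genuine gap: you are trying to derive the bound purely from stability plus the tie-length restriction, by chaining the indifferences along the $(+1,0)$-run into a single long tie. But the indifferences do not chain. For consecutive $(+1,0)$ edges $f_i=(a_i,b_{i+1})$ and $f_{i+1}=(a_{i+1},b_{i+2})$, the ``$0$'' of $f_i$ says that $a_i$ and $M(b_{i+1})=a_{i+1}$ lie in one tie of \emph{$b_{i+1}$'s} list, while the ``$0$'' of $f_{i+1}$ says that $a_{i+1}$ and $a_{i+2}$ lie in one tie of \emph{$b_{i+2}$'s} list; these are ties in the preference lists of \emph{different} vertices, and the interleaved $M$-edges and the strict ``$+1$'' comparisons on the $A$-side (which also live in different agents' lists) give no mechanism for merging them. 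So a run of length $k$ or more never forces any single tie of length greater than $k$, and no contradiction is reached — this is exactly the step ``combine the $k$ constraints to exhibit a set of $k+1$ vertices all mutually tied in some single vertex's preference list'' that would fail. Indeed, the statement is not a property of arbitrary stable matchings in graphs with ties of length at most $k$; it is a property of the particular matching produced by \Cref{algo:stbl-mat-2-tie}, which your argument never uses.

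The paper's proof is algorithmic: it tracks the quantities $\operatorname{rank}_{G'}(a_i,b_i)$, the position of the matched partner in the agent's proposal order within the auxiliary graph $G'$ built during the proposal phase, and shows that each $(+1,0)$ edge forces a strict decrease, yielding $\operatorname{rank}_{G'}(a_k,b_k)<\cdots<\operatorname{rank}_{G'}(a_1,b_1)\le k$. Hence after $k-1$ such edges the agent $a_k$ is matched to its top choice in $G'$, and the proposal history then forces the next edge to be labelled $(+1,-1)$ rather than $(+1,0)$. The ``$k$'' in the bound comes from the degree cap $\deg_{G'}(a)<k$ in the proposal loop, not from the tie lengths acting combinatorially on a single list. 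To repair your argument you would have to bring in this rank/proposal structure; a purely stability-based argument of the kind you sketch cannot establish the lemma.
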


\begin{proof}
Consider an alternating path:
\[
\rho = \langle \dots, e_1, f_1, e_2, f_2, \dots, e_t, f_t \rangle \quad \text{where } e_i \in M, f_i \in N
\]

Assume $f_1, f_2, \dots, f_{k-1}$ are all labeled $(+1, 0)$. We show $f_k$ cannot be labeled $(+1, 0)$.

For each $f_i = (a_i, b_{i+1})$ labeled $(+1, 0)$:
\begin{itemize}
\item $a_i$ prefers $b_{i+1}$ to $M(a_i) = b_i$
\item $b_{i+1}$ is indifferent between $a_i$ and $M(b_{i+1}) = a_{i+1}$
\end{itemize}

\begin{claim}
$\text{rank}_{G'}(a_{i+1}, b_{i+1}) < \text{rank}_{G'}(a_i, b_i)$ for $1 \leq i \leq k-1$
\end{claim}

\begin{proof}[Proof of Claim]
Case 1: $(a_i, b_{i+1}) \in G'$ \\
Then $\text{rank}_{G'}(a_i, b_{i+1}) < \text{rank}_{G'}(a_i, b_i)$ since $a_i$ prefers $b_{i+1}$ to $b_i$.

Case 2: $(a_i, b_{i+1}) \notin G'$ \\
Let $r = \text{rank}_{G'}(a_i, b_{i+1})$ from $a_i$'s last proposal to $b_{i+1}$ \\
Since $(a_i, b_{i+1}) \notin G'$ but $(a_{i+1}, b_{i+1}) \in G'$, we have $\text{rank}_{G'}(a_{i+1}, b_{i+1}) < r$

In both cases: $\text{rank}_{G'}(a_{i+1}, b_{i+1}) \leq \text{rank}_{G'}(a_i, b_{i+1}) < \text{rank}_{G'}(a_i, b_i)$
\end{proof}

Thus we have a chain:
\[
\text{rank}_{G'}(a_k, b_k) < \text{rank}_{G'}(a_{k-1}, b_{k-1}) < \cdots < \text{rank}_{G'}(a_1, b_1) \leq k
\]

Therefore $\text{rank}_{G'}(a_k, b_k) \leq 1$, so $b_k$ is $a_k$'s most preferred neighbor in $G'$.

Now, if $a_k$ prefers $b_{k+1}$ to $b_k$, then:
\begin{itemize}
\item $(a_k, b_{k+1})$ must be in $G_0$ (original graph)
\item $a_k$ would have proposed to $b_{k+1}$ before $b_k$ with rank 1
\item Since $b_{k+1}$ didn't accept, it must prefer its current partner to $a_k$
\item Thus $\text{vote}_{b_{k+1}}(a_k, M) = -1$
\end{itemize}

Therefore $f_k = (a_k, b_{k+1})$ is labeled $(+1, -1)$, not $(+1, 0)$.
\end{proof}

\begin{lemma}
In any alternating cycle, not all edges in $\rho \setminus M$ can be labeled $(+1, 0)$.
\end{lemma}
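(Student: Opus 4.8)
The plan is to argue by contradiction using the rank-monotonicity machinery already developed for \Cref{lemma:max-seg-len}. Suppose $\rho$ is an alternating cycle in $M \oplus N$ in which every edge of $\rho \setminus M = \rho \cap N$ is labeled $(+1,0)$. Write the cycle as $\rho = \langle e_1, f_1, e_2, f_2, \dots, e_t, f_t, e_1 \rangle$ with $e_i = (a_i, b_i) \in M$ and $f_i = (a_i, b_{i+1}) \in N$ (indices taken modulo $t$, so $b_{t+1} = b_1$). Since each $f_i$ is a $(+1,0)$ edge, the Claim inside the proof of \Cref{lemma:max-seg-len} applies to every $i$, giving $\text{rank}_{G'}(a_{i+1}, b_{i+1}) < \text{rank}_{G'}(a_i, b_i)$ for all $i \in \{1, \dots, t\}$ with indices cyclic.

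Chaining these inequalities around the whole cycle yields
\[
\text{rank}_{G'}(a_1, b_1) < \text{rank}_{G'}(a_t, b_t) < \cdots < \text{rank}_{G'}(a_2, b_2) < \text{rank}_{G'}(a_1, b_1),
\]
i.e. $\text{rank}_{G'}(a_1, b_1) < \text{rank}_{G'}(a_1, b_1)$, a contradiction. Hence not every edge of $\rho \setminus M$ can carry the label $(+1,0)$.

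The step I expect to require the most care is verifying that the Claim's proof (the two-case rank-drop argument) is genuinely valid for \emph{every} index around the cycle, not just along a path where we had a designated starting edge. Concretely, I need each $f_i = (a_i, b_{i+1})$ to be an edge of the original instance with $a_i$ strictly preferring $b_{i+1}$ to $M(a_i) = b_i$ (which is exactly what the label $(+1,0)$ encodes, after tie-breaking, so $a_i$ strictly prefers $b_{i+1}$ to $b_i$ in the broken-tie order), and each $e_i \in M \subseteq M_0$, so $(a_i,b_i) \in G'$ and $\text{rank}_{G'}(a_i,b_i)$ is well-defined. Both conditions hold for all $i$ because $\rho$ is an $M \oplus N$ cycle entirely inside the original graph $G$, every matched edge of $M$ survives into $M_0$ and hence into $G'$, and the $(+1,0)$ labeling is cycle-symmetric: there is no special endpoint, so the Claim's argument transports verbatim to each $i$. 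Once that is checked, the cyclic chain of strict rank inequalities closes up on itself and the contradiction is immediate; unlike the path case in \Cref{lemma:max-seg-len}, we do not even need the extra step forcing the next edge to be $(+1,-1)$, since the cycle already provides the contradiction on its own.
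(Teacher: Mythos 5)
Your proof is correct, and it leans on exactly the machinery the paper intends: the rank-drop Claim inside the proof of \Cref{lemma:max-seg-len}. The paper itself disposes of the cycle case with the one-line remark ``follows by considering the cycle as a path,'' i.e., by appealing to the bound that a maximal $(+1,0)$-segment has length at most $k-1$; strictly speaking that reduction needs the cycle to be ``unrolled'' (a short cycle with all $N$-edges labeled $(+1,0)$ does not by itself contain a segment longer than $k-1$), so the paper's justification is terser than it should be. Your argument sidesteps this: you apply the Claim at every index around the cycle, using that the $(+1,0)$ label is what the Claim actually needs (not the position $i \le k-1$), that each $M$-edge lies in $G'$ so its rank is well-defined, and that a strict original preference survives tie-breaking; chaining the strict inequalities $\mathrm{rank}_{G'}(a_{i+1},b_{i+1}) < \mathrm{rank}_{G'}(a_i,b_i)$ cyclically gives $\mathrm{rank}_{G'}(a_1,b_1) < \mathrm{rank}_{G'}(a_1,b_1)$, an immediate contradiction with no dependence on $k$. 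So the two proofs share the same key lemma, but your direct cyclic-monotonicity argument is self-contained and avoids the unrolling subtlety in the paper's reduction.
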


\begin{proof}
Follows from \Cref{lemma:max-seg-len} by considering the cycle as a path.
\end{proof}

\begin{theorem}
The matching $M$ has unpopularity factor $u(M) \leq k$.
\end{theorem}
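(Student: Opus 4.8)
The plan is to bound $u(M)$ by bounding, for an arbitrary matching $N$, the ratio $\phi(N,M)/\phi(M,N)$ using the labeled symmetric difference $M \oplus N$. First I would recall the standard accounting: $\phi(N,M) - \phi(M,N) = \sum_{(a,b) \in N \setminus M} \big(\operatorname{vote}_a(b,M) + \operatorname{vote}_b(a,M)\big)$, decomposed over the connected components $\rho$ of $M \oplus N$ (alternating paths and cycles), together with the boundary contributions from vertices matched in $M$ but not in $N$ at the open ends of a path. The key is to pair up the edges of $N$ in each component: every edge $f = (a,b) \in N \setminus M$ with label $(+1,0)$ — the only edges contributing net $+1$ to $\phi(N,M) - \phi(M,N)$ across the $a$-side plus $b$-side votes — must be ``charged'' to a nearby edge of $N$ that contributes negatively, so that the ratio stays bounded by $k$.

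The core structural input is \Cref{lemma:max-seg-len}: along any alternating path or cycle $\rho$, every maximal contiguous run of $(+1,0)$-labeled $N$-edges has length at most $k-1$, and by the cycle lemma a cycle cannot consist entirely of $(+1,0)$-edges. I would argue that this lets us associate to each maximal $(+1,0)$-segment of length $\ell \le k-1$ either an adjacent $N$-edge inside $\rho$ whose label is $(+1,-1)$, $(0,-1)$, $(-1,\cdot)$, etc. — i.e.\ an edge contributing at most $-1$ to the component sum — or, for a path, an open endpoint contributing $-1$ via the boundary term. Thus each block of at most $k-1$ vertices-worth of $+1$ votes from $N$ is compensated by at least one $-1$; grouping, the number of $M$-agents preferring $N$ is at most $k$ times the number preferring $M$, within each component. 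Summing over all components of $M \oplus N$ gives $\phi(N,M) \le k \cdot \phi(M,N)$, hence $\lambda(M,N) \le k$; taking the maximum over $N$ yields $u(M) \le k$. The degenerate cases ($\phi(M,N) = 0$) are handled by noting that if no vertex prefers $M$ to $N$ then no component can contain a $(+1,0)$-edge or an unfavorable endpoint, so by the two lemmas $N = M$ on that component and $\phi(N,M) = 0$ too, giving $\lambda(M,N) = 1 \le k$.

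The main obstacle I anticipate is the bookkeeping at the ends of alternating paths and the $B$-side votes: since jobs also have ties, an $N$-edge can be labeled $(+1,0)$ because $b$ is genuinely indifferent, and one must check that the chain-of-ranks argument of \Cref{lemma:max-seg-len} still isolates a genuinely negative edge to charge against rather than another indifferent one — in particular ruling out a maximal $(+1,0)$-run that is bounded on one side by the path's endpoint and on the other by a $(0,\cdot)$ or $(+1,\cdot)$ edge with no clean $-1$ to absorb it. I would handle this by a careful case analysis on the four possibilities for an alternating path component (both endpoints in $N$, one endpoint in $N$, neither endpoint in $N$, or a cycle), mirroring the case split already used in the proof of \Cref{thm:two-side-popular-chrc}, and showing that in each case the total $N$-favoring votes in the component is at most $k$ times the $M$-favoring votes (counting the boundary $-1$'s). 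Once that per-component inequality is established, summing is immediate.
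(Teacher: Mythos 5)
Your proposal follows essentially the same route as the paper's proof: decompose $M \oplus N$ into alternating paths and cycles, invoke \Cref{lemma:max-seg-len} to cap each maximal run of $(+1,0)$-labeled $N$-edges at length $k-1$, charge each run to an adjacent $-1$ contribution (a following $N$-edge carrying a $-1$, or a path endpoint matched in $M$ but not in $N$), and sum over components to get $\phi(N,M) \le k\,\phi(M,N)$, i.e.\ $u(M) \le k$. Your explicit handling of the path-endpoint boundary case and the degenerate $\phi(M,N)=0$ case is, if anything, slightly more careful than the paper's own write-up, which simply asserts that every segment is followed by an edge with a $-1$ in its label.
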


\begin{proof}
Let $N$ be any matching in $G$. Consider $M \oplus N$ decomposed into alternating paths and cycles.

For each alternating component $\rho$:
\begin{itemize}
\item Extract maximal contiguous segments $s_1, \dots, s_r$ of edges in $\rho \cap N$ labeled $(+1, 0)$
\item By \Cref{lemma:max-seg-len}, $|s_i| \leq k-1$ for each $i$
\item Each $s_i$ is followed by an edge in $\rho \cap N$ with at least one ``$-1$'' in its label
\item Map all edges in $s_i$ to this following edge
\end{itemize}

Each edge with a ``$-1$'' vote is mapped to by at most $k-1$ edges with $(+1, 0)$ labels.

Now count votes:
\begin{itemize}
\item $\phi(N, M)$ = total $+1$ votes from edges in $N \setminus M$
\item $\phi(M, N)$ = total $-1$ votes from edges in $N \setminus M$ (plus potential extra from $M \setminus N$)
\end{itemize}

From our mapping:
\[
(\text{\# of $+1$ votes from $(+1, 0)$ edges}) \leq (k-1) \cdot (\text{\# of edges with $-1$ votes})
\]

Also, edges labeled $(+1, -1)$ contribute one $+1$ and one $-1$.

Therefore:
$\phi(N, M) \leq (k-1) \cdot \phi(M, N) + \phi(M, N) = k \cdot \phi(M, N)$
\end{proof}

\subsection{Running Time}

\begin{theorem}
The algorithm runs in $O(k^2 \cdot m \cdot n)$ time.
\end{theorem}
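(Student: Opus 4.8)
The plan is to establish the $O(k^2 \cdot m \cdot n)$ bound by accounting separately for the overhead of the graph-extension step and for the cost of the proposal phase that dominates the running time. First I would argue that \textbf{Step 1} is cheap: the extended graph $G_0$ has $|A_0| + |B_0| = O(n)$ vertices and $|E_0| = m + O(n) = O(m)$ edges, and breaking ties arbitrarily on each vertex can be done in time linear in the length of that vertex's list, so the entire construction and tie-breaking phase costs $O(m + n)$, which is absorbed by the later bound. Next I would turn to \textbf{Step 2}, the proposal phase, which is the heart of the analysis: I would invoke the running-time analysis of Kavitha's \textsc{Propose} subroutine \cite{10.1007/s00453-024-01215-6} applied to $G_0$, recalling that in that algorithm each even agent $a$ may hold up to $k$ edges in $G'$ and, over the course of the algorithm, each agent makes at most $O(k \cdot \deg(a))$ proposals (an agent can traverse its preference list at most $O(k)$ times, once for each ``level''), with each proposal processed in $O(k)$ time because a job compares the incoming proposal against up to $k$ currently held edges and recomputes rank values. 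Summing $\sum_{a \in A_0} k \cdot \deg(a) \cdot k$ gives $O(k^2 \cdot m)$, and I would note where the extra factor of $n$ enters — either from recomputing a maximum matching in the sparse graph $G'$ (which has $O(kn)$ edges, so a maximum matching costs $O(kn \cdot n) = O(kn^2)$ by Hopcroft--Karp, or one can be more careful) or, more conservatively, from bookkeeping per proposal that scales with $n$; I would pick the cleanest route that yields the stated bound without over-claiming.

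Then I would finish with \textbf{Step 3}: deleting the edges incident to dummy vertices from $M_0$ to obtain $M$ takes $O(n)$ time, since there are only $O(n)$ dummy vertices and each is incident to at most one matching edge. Combining the three steps, the proposal phase dominates and the total is $O(k^2 \cdot m \cdot n)$, noting that $m \geq n/2$ in any instance where the bound is nontrivial (isolated vertices can be discarded), so the additive $O(m+n)$ terms are swallowed. The main obstacle I expect is pinning down exactly how Kavitha's subroutine distributes its work — in particular, justifying the claim that each agent re-traverses its list only $O(k)$ times rather than $\Omega(k^2)$ or more, and confirming the per-proposal cost, since the stated bound $O(k^2 \cdot m \cdot n)$ is somewhat loose and the tightest honest accounting may differ by polynomial factors; I would therefore lean on citing \cite{10.1007/s00453-024-01215-6} for the per-step analysis and only add the $O(n)$ factor conservatively where the dummy-vertex extension and the final maximum-matching computation force it. I would also double-check that the extension to $G_0$ does not blow up any of Kavitha's parameters: $|A_0|, |B_0|, |E_0|$ all remain $O(n)$ and $O(m)$ respectively, and all ties in $G_0$ still have length at most $k$ (dummy edges are singleton ties appended at the bottom), so Kavitha's bound applies verbatim to $G_0$ and the only genuinely new contribution to the running time is the linear-time pre- and post-processing.
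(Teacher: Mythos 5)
Your accounting is essentially the paper's: it likewise bounds $|A_0 \cup B_0| = O(n)$ and $|E_0| = O(m)$, charges the proposal phase via Kavitha's \textsc{Propose} subroutine (the paper states $O(k \cdot n)$ per call and $O(k \cdot m)$ total call volume), adds the cost of maintaining the maximum matching and Dulmage--Mendelsohn decomposition, and multiplies out to $O(k^2 \cdot m \cdot n)$. The only difference is where the factor $n$ is placed — you attach it to the matching-maintenance/bookkeeping while the paper folds it into the per-call cost of \textsc{Propose} — and since the claimed bound is loose, either distribution of the factors reaches it, so your proof is correct and follows the same route.
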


\begin{proof}
Graph $G_0$ has $O(n)$ vertices and $O(m)$ edges. Each \textsc{Propose}$(a)$ call takes $O(k \cdot n)$ time. Total \textsc{Propose} calls takes time $O(k \cdot m)$ across all agents. Maintaining maximum matching and Dulmage-Mendelsohn decomposition takes $O(m)$ per update. Which translates to total time $O(k^2 \cdot m \cdot n)$
\end{proof}
\Cref{algo:stbl-mat-2-tie} successfully extends Kavitha's result to two-sided ties while maintaining stability of the output matching, unpopularity factor $\leq k$
and polynomial running time $O(k^2 \cdot m \cdot n)$. This resolves the open problem from the original paper and provides a complete solution for stable matchings with two-sided ties and bounded unpopularity factor.

\section{Conclusion and Open Questions} 
This paper investigates the existence and computation of robust popular matchings in bipartite graphs under two preference models: the one-sided model and the two-sided model with one-sided ties. We provide polynomial-time algorithms to decide whether a robust popular matching exists when the input instances differ only in the preference order of a single agent. For the one-sided model, we leverage known characterizations of popular matchings and analyse specific perturbation cases. For the two-sided model with one-sided ties, we introduce the concept of hybrid instances and establish a correspondence between robust popular matchings and popular matchings in these hybrid instances. Additionally, the paper presents an algorithm for computing stable matchings with bounded unpopularity factors in the presence of two-sided weak preferences (ties), extending prior work to a more general setting. These results enhance our understanding of robustness in matching markets and provide efficient solutions for scenarios with limited preference uncertainty.

Several open questions remain for future work. First, can the robustness results be extended to cases where more than one agent changes their preferences? We note that our approach for the two-sided model does not generalize to multiple perturbing agents. Second, is it possible to design efficient algorithms for robust popular matchings in models with more general preference structures, two-sided weak preferences without bounded tie lengths? Third, what is the computational complexity of finding robust popular matchings when the underlying graph changes across instances, rather than just the preferences? Finally, can similar robustness guarantees be achieved for other solution concepts in matching, such as stability or Pareto optimality, under similar perturbation models?
%%%%%%%%%%%%%%%%%%%%%%%%%%%%%%%%%%%%%%%%%%%%%%%%%%%%%%%%%%%%%%%%%%%%%%%%

%%% The next two lines define, first, the bibliography style to be 
%%% applied, and, second, the bibliography file to be used.

\bibliography{sample}

\end{document}